\newcommand{\ignore}[1]{}
\newtheorem{theorem}{Theorem}[section]
\newtheorem{claim}[theorem]{Claim}
\newtheorem{proposition}[theorem]{Proposition}
\newtheorem{definition}{Definition}
\newtheorem{example}{Example}
\newcommand{\vc}[1]{\ensuremath{\mathbf{#1}}\xspace}
\newcommand{\INDICATOR}{\ensuremath{\mathds{1}}\xspace}
\newcommand{\Indicator}[1]{\ensuremath{\INDICATOR[#1]}\xspace}
\newcommand{\NUMENG}{\ensuremath{K}\xspace}
\newcommand{\NUMPAGES}{\ensuremath{N}\xspace}
\newcommand{\PageSet}{\ensuremath{\Omega}\xspace}
\newcommand{\DF}{\ensuremath{\beta}\xspace}
\newcommand{\TypeDist}{\ensuremath{\Gamma}\xspace}
\newcommand{\TypeSupp}{\ensuremath{\text{Supp}(\TypeDist)}\xspace}
\newcommand{\Td}[1]{\ensuremath{\gamma(#1)}\xspace}
\newcommand{\SelRule}[1][]{\ensuremath{%
\ifthenelse{\equal{#1}{}}{f}{f_{#1}}}\xspace}
\newcommand{\SatProb}[1]{\ensuremath{q_{#1}}\xspace}
\newcommand{\SatProbs}{\ensuremath{\vc{q}}\xspace}
\newcommand{\SatProbV}[1]{\ensuremath{\vc{q}_{#1}}\xspace}
\newcommand{\SatProbPage}[1]{\ensuremath{\vc{q}(#1)}\xspace}
\newcommand{\SP}[2]{\ensuremath{q_{#1}(#2)}\xspace}
\newcommand{\SPr}[1]{\ensuremath{q(#1)}\xspace}
\newcommand{\SatMat}{\ensuremath{\mathbf{Q}}\xspace}
\newcommand{\PERM}{\ensuremath{\sigma}\xspace}
\newcommand{\Matching}[1][]{\ensuremath{%
\ifthenelse{\equal{#1}{}}{\PERM}{\PERM_{#1}}}\xspace}
\newcommand{\MatchingS}[2][]{\ensuremath{%
\ifthenelse{\equal{#1}{}}{\PERM[#2]}{\PERM_{#1}[#2]}}\xspace}
\newcommand{\Perm}[1][]{\ensuremath{%
\ifthenelse{\equal{#1}{}}{\PERM}{\PERM_{#1}}}\xspace}
\newcommand{\PermS}[2][]{\ensuremath{%
\ifthenelse{\equal{#1}{}}{\PERM[#2]}{\PERM_{#1}[#2]}}\xspace}
\newcommand{\Perms}{\ensuremath{\vc{\PERM}}\xspace}
\newcommand{\OptPerm}[1][]{\ensuremath{%
\ifthenelse{\equal{#1}{}}{\hat{\PERM}}{\hat{\PERM}_{#1}}}\xspace}
\newcommand{\OptPermS}[2][]{\ensuremath{%
\ifthenelse{\equal{#1}{}}{\hat{\PERM}[#2]}{\hat{\PERM}_{#1}[#2]}}\xspace}
\newcommand{\OptPerms}{\ensuremath{\vc{\hat{\PERM}}}\xspace}
\newcommand{\PermDist}[2][]{\ensuremath{%
\ifthenelse{\equal{#1}{}}{\mathcal{D}_{#2}}{\mathcal{D}_{#2}^{(#1)}}}\xspace}
\newcommand{\PermDists}{\ensuremath{\vc{\mathcal{D}}}\xspace}
\newcommand{\MatchDist}[2][]{\PermDist[#1]{#2}}
\newcommand{\JOINTDENSITY}{\ensuremath{\rho}\xspace}
\newcommand{\JointDensity}[2][]{\ensuremath{%
\ifthenelse{\equal{#1}{}}{\JOINTDENSITY(#2)}{\JOINTDENSITY_{#1}(#2)}}\xspace}
\newcommand{\MarginalDensity}[1]{\ensuremath{\JOINTDENSITY_{#1}}\xspace}
\newcommand{\MixedPermDists}{\ensuremath{\mathbf{\mathcal{C}}}\xspace}
\newcommand{\NASH}{\ensuremath{\mathcal{N}}\xspace}
\newcommand{\CORREQSET}{\ensuremath{\text{EQ}^{\text{Corr}}}\xspace}
\newcommand{\CorrEqSet}[2]{\ensuremath{\CORREQSET_{#1}(#2)}\xspace}
\newcommand{\CorrEq}{\MixedPermDists}
\newcommand{\SIGNAL}[1][]{\ensuremath{%
\ifthenelse{\equal{#1}{}}{Z}{Z_{#1}}}\xspace}
\newcommand{\SIGNALV}{\ensuremath{\vc{Z}}\xspace}
\newcommand{\Payoff}[2]{\ensuremath{u_{#1}(#2)}\xspace}
\newcommand{\Welfare}[2][]{\ensuremath{%
\ifthenelse{\equal{#1}{}}{W(#2)}{W_{#1}(#2)}}\xspace}
\newcommand{\SearchGame}[2]{\ensuremath{\textsc{Search}_{#1}(#2)}\xspace}
\newcommand{\SearchGameName}{Search Engine Competition Game\xspace}
\newcommand{\SingletonGameName}{Singleton Search Engine Game\xspace}
\newcommand{\StatProb}[1][]{\ensuremath{\ifthenelse{\equal{#1}{}}{\mathbf{\pi}}{\pi_{#1}}}\xspace}
\newcommand{\RetFail}[1]{\ensuremath{R^{(f)}_{#1}}\xspace}
\newcommand{\RetSucc}[1]{\ensuremath{R^{(s)}_{#1}}\xspace}
\newcommand{\TransFail}[2]{\ensuremath{\tau^{(f)}_{#1,#2}}\xspace}
\newcommand{\TransSucc}[2]{\ensuremath{\tau^{(s)}_{#1,#2}}\xspace}
\newcommand{\ExitFail}[1]{\ensuremath{p^{(f)}_{#1}}\xspace}
\newcommand{\ExitSucc}[1]{\ensuremath{p^{(s)}_{#1}}\xspace}
\newcommand{\OPT}[1][]{\ensuremath{%
\ifthenelse{\equal{#1}{}}{\text{OPT}}{\text{OPT}_{#1}}}\xspace}
\newcommand{\EQ}[1][]{\ensuremath{%
\ifthenelse{\equal{#1}{}}{\text{EQ}}{\text{EQ}_{#1}}}\xspace}
\begin{document}


\begin{titlepage}
\title{User Satisfaction in Competitive Sponsored Search}

\author{David Kempe\thanks{Department of Computer Science, University of Southern California}
\and
Brendan Lucier\thanks{Microsoft Research, New England}
}
\date{}

\maketitle

\begin{abstract}
We present a model of competition between web search algorithms, and study the impact of such competition on user welfare. In our model, search providers compete for customers by strategically selecting which search results to display in response to user queries. Customers, in turn, have private preferences over search results and will tend to use search engines that are more likely to display pages satisfying their demands.

Our main question is whether competition between search engines increases the overall welfare of the users (i.e., the likelihood that a user finds a page of 
interest). When search engines derive utility only from customers to whom they show relevant results, we show that they differentiate their results, and every equilibrium of the resulting game achieves at least half of the welfare that could be obtained by a social planner. This bound also applies whenever the likelihood of selecting a given engine is a convex function of the probability that a user's demand will be satisfied, which includes natural Markovian models of user behavior.

On the other hand, when search engines derive utility from all customers (independent of search result relevance) and the customer demand functions are not convex, there are instances in which the (unique) equilibrium involves no differentiation between engines and a high degree of randomness in search results.  This can degrade social welfare by a factor of $\Omega(\sqrt{\NUMPAGES})$ relative to the social optimum, where \NUMPAGES is the number of webpages.  These bad equilibria persist even when search engines can extract only small (but non-zero) expected revenue from dissatisfied users, and much higher revenue from satisfied ones.

\end{abstract}
\end{titlepage}




\section{Introduction} \label{sec:introduction}

One of the most central tenets of economic market theory is that competition
between vendors will benefit consumers, whether by driving down prices,
encouraging product differentiation, or increasing the overall quality
of goods.
This belief is supported by many models and a long
history of empirical studies.
(See, e.g., \cite{mas-collel:whinston:green,varian:micro} for classic overviews.)

In a market where consumers directly buy products from vendors, these
benefits of competition are clear: given a choice between multiple 
competing products, a customer can be expected to choose whichever he
finds most appealing at its given price. 
This dynamic naturally leaves inferior and overpriced products unsold.

In this paper, we analyze a game of competition 
between \emph{search engines},
and its impact on the welfare of the users.
The main difference between the market for sponsored search and
classic oligopoly models is that in sponsored search, users are not the
\emph{customers} (since search results are provided to them for free), 
but the \emph{product}, which is sold to advertisers.
As a result, a search engine might have an interest in satisfying users
only inasmuch as it can monetize their visit and satisfaction --- this
results in a potentially serious misalignment of incentives, and thus
loss in user welfare.
Of course, this effect may be partially mitigated by the fact that
a reputation for quality can help to attract future usage of a search
engine, so competition may help users.

At a high level, a competition between search engines is essentially a
competition between (potentially randomized) \emph{algorithms} that 
return collections of webpages in response to search queries. 
A query typically does not completely determine which
web pages are being searched for.\footnote{Think of polysemy, or
simply different information needs of different users.}
As a result, search engines will generally not be able to
deterministically satisfy every user;
in return, a searcher may not know whether a set of results
will satisfy his demand before choosing a search
provider.\footnote{In this sense, search results are an ``experience
  good'' \cite{shapiro:varian:information-rules}.}

If there is only a single engine that acts as a monopolist, the
algorithmic solution to the web search problem is clear: 
given a query, the engine should display the pages that maximize the
probability that the user is satisfied. 
(We assume that engines have accurate estimates of searchers'
  probabilities for desiring different pages.)
When multiple engines compete for customers, it is less clear how the
engines will behave at equilibrium. 
Will they differentiate their search results to capture
different parts of the market, or will competition perversely drive
them toward displaying esoteric webpages in their rankings, degrading
the overall user experience?

The main insight of our work is that
the answer depends on two features of the competition:
(1) the alignment of utility between the engines and users, and 
(2) how users respond to search result quality.
In order to elaborate on these insights, we briefly describe some
features of our basic model of competitive search.
(The missing parts of the model are defined in Section~\ref{sec:models}.)

\subsection{A brief overview of the model}
\NUMENG search engines are competing for the attention of users, by
displaying the set of \NUMPAGES web pages in some order.
Each query by a user\footnote{whom we can imagine as drawn
    from an infinite population}
induces a known probability distribution 
over subsets $S$ of web pages; the probability associated with
the set $S$ is the probability that the user would be satisfied upon
seeing at least one page in $S$.
When the query arrives, each search engine can select (a
distribution over) the order in which the \NUMPAGES pages are displayed.
In addition to the set $S$, the user's type also includes a patience
threshold $t$: he is satisfied if the first $t$ pages in the ordering
contain at least one page from $S$.

Based on the distributions of orders in which the \NUMENG search engines
display pages, a user interested in a set $S$ will
choose an engine to visit.  
The choice can be probabilistic, and will be correlated in some way
with search performance.
We assume that the probability of choosing engine $i$ is monotone
non-decreasing in the probability that $i$ would satisfy the user, and
monotone non-increasing in the probabilities that the competing engines
$i' \neq i$ would satisfy him. 
This model of user behavior is very general, allowing for many
concrete instantiations of such \emph{selection rules}, 
including natural Markovian models\footnote{For
example, a user might stick with his current engine until it fails to
satisfy a query, at which point he switches to an alternative.},
proportional choice, and others.

We note that even though a user's choice depends on the search engines'
algorithms, the intended interpretation is not that users know their desired
webpages in advance (which would eliminate the need for search engines).
Rather, our model is intended to capture the fact that 
the fraction of users choosing an engine can depend, possibly in the long
run, on the likelihood that a query on a certain topic will result in
satisfactory search results.

It remains to describe the search engine payoffs in the game.
When a user visits a search engine, it derives
revenue by showing ads to the user.  In sponsored search, the user
is shown ads that are aligned with his perceived intent.  Thus, even
conditioned on a user having chosen a given engine, the revenue generated
can be correlated with how well the engine inferred the user's intent, which
is correlated with whether the user was satisfied by the search results.
Motivated by this, we define engine utilities as follows: an engine receives
a payoff of $1$ for every user that visits the engine \emph{and is satisfied},
and a payoff of $\DF \in [0,1]$ for users that visit the engine but are
not satisfied.  
That is, when $\DF = 1$ an engine receives full benefit from each
user regardless of the search results, whereas
$\DF = 0$ implies that an engine receives no revenue from a customer
if it presents irrelevant results.

There is a natural interpretation of the parameter \DF in
terms of the type of advertising chosen by the search engines.
When using per-impression advertising (e.g., banner ads), the search
engine receives the full payment from the advertiser when a user
visits, regardless of satisfaction; this corresponds to $\DF=1$.
On the other hand, with per-click advertising, the search engine is
paid only when a user clicks on an ad, which will be highly
correlated with the relevance of the results and ads displayed; thus,
pure per-click advertising corresponds to $\DF=0$.
A mix of the two advertising models naturally gives rise to
intermediate values of \DF.

We emphasize that even in the extreme case of $\DF=1$ (a fully
per-impression advertising model), user satisfaction is still relevant
as it provides a way to attract users away from competing engines. 
The difference lies in how a user is valued, \emph{given} that he has
already made his choice about which engine to use.

\subsection{Our Results}

We find that the outcome of competition depends heavily 
on both the value of \DF (i.e., on whether or not an engine derives direct
benefit from unsatisfied users), 
and on the form of the selection rule (i.e.,  how users respond to
changes in satisfaction probability).
Our main analytical result, informally stated, is the following dichotomy:
\begin{enumerate}
\item If $\DF = 0$ (search engines obtain revenue only when a
user is satisfied), or if the engine selection rule is convex, 
then the equilibrium (which is essentially unique)
will have a large amount of specialization. 
Engines will choose deterministic strategies aimed at
targeting particular subsets of the users.
Regardless of the number of search engines, the Price of
Anarchy is bounded by 2, even for mixed or correlated equilibria,
and this bound is tight.
In other words, competition degrades user satisfaction at most by a
factor 2 compared to the outcome if engines were to pool their
resources to best satisfy users.

\item In contrast, we show in Section~\ref{sec:beta1} that
if $\DF > 0$, then for a large class of non-convex engine selection
rules, there exist symmetric equilibria, 
i.e., all engines use the same algorithm.
In this equilibrium, there is no specialization, meaning that users
are no better off than with a single engine.
In fact, there are cases in which the 
equilibria can be \emph{worse} than the optimal
single-engine solution, with a pure Price of Anarchy
of $\Omega(\sqrt{\NUMPAGES})$ even when $\NUMENG = 2$.
When $\DF = 1$ we show that this equilibrium is unique, and hence
even the pure Price of Stability is $\Omega(\sqrt{\NUMPAGES})$.
\end{enumerate}

Thus, informally stated, when search engines only want to
satisfy viewers (e.g., because they use purely per-click advertising), 
searcher welfare is guaranteed to be close to the social optimum,
whereas it can be very far from optimum when search engines only want
to attract viewers (e.g., because they use purely
  per-impression advertising). 
However, these inefficiencies vanish under certain models of user behavior
that result in convex selection rules.  The intuition behind this phenomenon 
is that convexity incentivizes engines to have high satisfaction 
probability for a few user types, rather than infrequently satisfying
many types.
The overall result is specialization among the engines if
  there are enough high-demand pages.
As a concrete example, we show that convexity occurs under a natural
class of Markovian models of user behavior.

\subsection{Related Work}
\label{sec:related-work}
Our model is closely related to the study of competitive algorithm
design due to Immorlica et al.~\cite{dueling}.
In that work, two algorithms (for a general optimization problem) compete 
to maximize the probability that they will outperform their opponent
on an (unknown, randomly selected) input.
Competitive search is given as an example in \cite{dueling}, though
their model differs significantly from our own: in their version, the
goal of each engine is to place a single desired page closer to the
top of the search results than the opponent.
The all-or-nothing nature of this goal creates some unintuitive
best-response strategies, and may be questionable in terms of modeling
actual web searcher behavior.
In contrast, we focus on a richer model of competition between 
search algorithms, and explore the (in)efficiency of equilibria for this 
class of games.
There does not appear to be a direct reduction from the model of 
\cite{dueling} to our general model.

A somewhat similar model of search engine competition
was proposed by Khoussainov and Kushmerick
\cite{khoussainov:thesis,khoussainov:kushmerick}.
They consider a metasearcher that directs queries to
specialized engines.
The decision space and utility structure of the game are different:
Khoussainov and Kushmerick assume that search engines decide on topics
to include into (or exclude from) their indexes, and that there is a cost
associated with including topics. 
The selection rule used in their model is the Majority Selection
Rule from Section~\ref{sec:selection-rules} of our paper,
and a search engine derives full benefit from all users that are
directed to it (in our terminology, the case $\DF = 1$). 
Rather than studying the equilibria of the resulting game,
Khoussainov and Kushmerick
\cite{khoussainov:thesis,khoussainov:kushmerick}
propose and experimentally evaluate reinforcement learning
approaches for the search engines.

Differentiation between competing search engines has been
studied under economic models by Mukhodhyay, Telang, and
Rajan~\cite{mukhopadhyay:telang:rajan}. They primarily focus on
\emph{vertical} differentiation, i.e., why lower-quality search
engines can survive in a market.
Mukhodhyay, Telang, and Rajan
also include a model including horizontal differentiation, which
means that search engines can choose different answers to appeal to
different users; their models are stylized in that they allow
choice of a quality level and a ``location'' for the engine; as such,
they somewhat resemble Hotelling's model for competition
\cite{hotelling:stability}.

Competition between search engines has also been studied in an
economic model by Argenton and Pr\"{u}fer
\cite{argenton:prufer:competition-externalities}. Their model focuses
on externalities that arise when engines can use data from past
searches to improve future searches, leading to herding behavior
toward a monopoly. Their main recommendation is that in order to
encourage competition, engines should be forced to share search
histories with each other. In this sense, their model focuses on an
aspect orthogonal to ours.

Competition for user participation has been modeled in the
context of auctions
\cite{liu:chiu:competition,mcafee:comp.sellers,peters:comp.auction}.
In these settings, each auctioneer selects a mechanism;
bidders then select an auction to participate in
and decide on their bidding strategy. 
The nature of payoffs to the players is significantly different
between auctioneers and search engines; among others, the participants
(bidders) compete with each other after choosing an auction, and are
fully rationally in their choices, whereas we do not assume
rationality of the searchers, and searchers naturally do not compete
with each other.

The revenue of a single search engine has been studied 
under various models of user behavior and advertising revenue
\cite{aggarwal:feldman:muthukrishnan:pal,athey:ellison:position-auctions,edelman:ostrovsky:schwarz,ClickCascades}.
These works focus on the game played by
advertisers, rather than by multiple search engines.

The analysis of product differentiation in competitive markets has a
long history in economic theory.
The classic Hotelling model \cite{hotelling:stability} 
implies that competing firms will tend to produce similar products.
Later works, beginning with Aspremont, Gabszewicz, and Thisse
\cite{aspremont:hotelling}, instead find that product differentiation
can improve revenue by, for example, thinning markets to affect future
prices.  
See \cite{varian:micro} for a treatment of this classic theory.
In contrast, differentiation occurs in our model as a way for
competitors to reduce customer uncertainty about the product
they will receive from each.
Correspondingly, this differentiation occurs at equilibrium when the
incentives of the search engines are strongly aligned with the customers'
ultimate product satisfaction.

Our analysis of competition in the case $\DF = 0$ shares some similarities
with the analysis of valid utility games \cite{vetta:nash} and smooth
games \cite{roughgarden:smoothness}.
We note that the click-through variant of our game is mathematically
similar to Oren and Kleinberg's recent analysis of a research credit
allocation game \cite{kleinberg:misalloc}, where researchers select
projects in which to invest effort.
Besides the motivation behind these games, the primary difference is
that we model a search engine's (pure) strategy as a probability
distribution over search results, rather than a deterministic list of
pages.





\section{Models and Concepts} 
\label{sec:models}

We employ the following (standard) notation.
We write $[n] = \SET{1, \ldots, n}$. 
Vectors are denoted by bold face.
For a vector $\vc{x}$, we use $\vc{x}_{-i}$ to denote the vector
without coordinate $i$, and $(y, \vc{x}_{-i})$ for the vector with 
coordinate $i$ replaced by $y$.
We denote the indicator function by \INDICATOR; that is,
for an event \Event{E}, we write $\Indicator{\Event{E}}$ for the
function that is $1$ when \Event{E} happens and $0$ otherwise.

We will frequently reason about distributions over discrete sets (or
their power sets).
Let $\mathcal{A}$ be a finite set,
and $p$ a distribution over $\mathcal{A}$.
We say that $p$ \emph{is in general position}
if for every pair $S,S' \subseteq \mathcal{A}$ of distinct sets 
($S \neq S'$), we have that $p(S) \neq p(S')$.
Notice that the set of distributions not in general position has
measure 0, i.e., distributions are generically in general position.

\subsection{The \SearchGameName}
\label{sec:game-setup}
Our formal model of competition between search engines is as follows.
The game is played by \NUMENG engines.
There is a universe \PageSet of \NUMPAGES possible webpages, 
and an infinite population of \emph{users} (or 
\emph{searchers}).\footnote{We consistently use female pronouns for search
engines and male pronouns for searchers.}
Users wish to find webpages, and search engines are
interested in attracting users, and possibly also in satisfying their
search needs.

Each user has a private \emph{type} specified by a subset 
$S \subseteq \PageSet$ of pages and a slot threshold $t \geq 0$.
$S$ is the set of pages that would satisfy the user's request, 
and we call it the \emph{desired pages} or \emph{satisfying pages} of the user.
$t$ is the number of search results the user will examine to
find a satisfying page before abandoning the search; we also call $t$
the \emph{patience threshold} (or simply \emph{threshold}) of the user.

There is a commonly known distribution \TypeDist over user types,
where \Td{S,t} is the probability of drawing the pair $(S,t)$.
This distribution incorporates the search query
issued by the user; in other words, we consider the game for a
specific query, since games for different queries can be solved
independently.\footnote{This holds under the (strong)
    assumption that users choose their engine per query.
    See a brief discussion of this issue in Section~\ref{sec:conclusions}.}
Without loss of generality, we assume that 
$\sum_{S \ni n} \sum_t \Td{S,t} > 0$ for all pages $n \in \PageSet$,
i.e., each page satisfies at least one type in the support
of \TypeDist.

The search engines have knowledge of \TypeDist.
Their strategies are based on \emph{permutations} of pages \dkedit{$j \in \PageSet$.}
When engine $i$ plays permutation \Perm[i],
she places page $\Perm[i](j)$ in slot $j$ (for each $j$).
We also think of \Perm[i] as a \emph{matching} between
slots and pages,
and write $\PermS[i]{t} = \Set{\Perm[i](j)}{j \in [t]}$ for the set
of pages placed in the first $t$ slots by engine $i$.
A user with type $(S,t)$ is \emph{satisfied} by a matching
$\Perm[i]$ precisely if $\PermS[i]{t} \cap S \neq \emptyset$,
i.e., if at least one of the pages satisfying the user is among the
top $t$ pages displayed by the engine.

When a user visits engine $i$, the engine obtains utility $1$ if the user
is satisfied, and utility $\DF \in [0,1]$, a fixed parameter of
the game, if the user is not satisfied. 
The parameter \DF models the tradeoff between optimizing for
market share vs.~customer satisfaction, 
or the fraction of display ads among the ads shown by the engine.
A large value of \DF means that the engine
would simply like to attract many users, while a small value implies
that users are primarily (or only) valuable if they are satisfied.

A \emph{pure strategy} for engine $i$ is a distribution
$\PermDist{i}$ over matchings. 
(This apparent misnomer is discussed in more detail
in Section~\ref{sec:equilibria}.)
A \emph{(pure) strategy profile}
$\PermDists = (\PermDist{1}, \dotsc, \PermDist{\NUMENG})$ 
is a vector of strategies, one for each engine. 
A user type $(S,t)$ and engine strategy $\PermDist{i}$ together
determine a satisfaction probability 
$\SatProb{i}(S,t,\PermDist{i}) =
\Prob[\Perm \sim \PermDist{i}]{\PermS{t} \cap
 S \neq \emptyset}$.
We often write $\SatProb{i}$ for
$\SatProb{i}(S,t,\PermDist{i})$ when the type and strategy are clear
from the context.
The vector of the satisfaction probabilities for all engines is
denoted by $\SatProbs(S,t,\PermDist{i})$ or $\SatProbs$.
Engines may also mix strategies, playing distributions over
distributions. 
In particular, below, we will be interested in
correlated equilibria of the game.

Two permutations $\Perm, \Perm'$ are
\emph{equivalent} with respect to \TypeDist if, for every type $(S,t)$
in the support of \TypeDist, $(S,t)$ is satisfied by either both or
neither of the permutations.
For example, if every user has a threshold of at least $2$, then two
permutations that differ only in the order of the first two pages are
equivalent.  
A pure strategy is \emph{deterministic} if its support
consists entirely of permutations that are equivalent to each other; 
we abuse terminology by stating that such a
strategy consists of a single permutation. 
A strategy profile (pure or mixed) is \emph{deterministic} if the
support of each engine's distribution contains only
deterministic strategies.

A user of type $(S,t)$ must select which search engine to use,
taking into account the profile $\SatProbs$ of satisfaction probabilities.  
A \emph{selection rule} is a function \SelRule
which maps a profile of satisfaction probabilities
$\SatProbs$ to a
distribution over search engines.  We write $\SelRule[i](\SatProbs)$
for the probability that a user selects engine $i$ given $\SatProbs$.
We discuss properties of selection rules and several concrete examples
in more detail in Section~\ref{sec:selection-rules} below.

The precise sequence of events in the game is as follows.
First, the engines jointly determine a mixed strategy profile
\MixedPermDists. 
Let \JointDensity{\PermDists} be the density function of the
joint distribution of pure strategies under \MixedPermDists.
Notice that we allow the random choices to be correlated.
Once \MixedPermDists is determined, 
a vector of pure (but not necessarily deterministic)
strategies \PermDists is drawn according to \JOINTDENSITY.
Next, a user type $(S,t)$ is realized from \TypeDist.
\PermDists is known to the user,
and determines a satisfaction probability profile
$\SatProbs = (\SatProb{1}, \dotsc, \SatProb{\NUMENG})$ for the engines.
The user selects a search engine stochastically according to the
distribution $\SelRule(\SatProbs)$.
After the user selects an engine, say $i$, a permutation
$\Perm[i]$ is realized from the distribution $\PermDist{i}$.
If the user is satisfied by $\Perm[i]$, engine $i$ receives a
payoff of $1$; otherwise, she receives a payoff of \DF.
All other engines $i' \neq i$ receive payoff $0$. 
In summary, the payoff to engine $i$ under the joint strategy
  profile \MixedPermDists is 
\begin{align}
\Payoff{i}{\MixedPermDists}
& = 
\int_{\PermDists}
\sum_{\substack{S \subseteq \PageSet \\ t \geq 0}} 
\Td{S,t} 
\cdot \SelRule[i](\SatProbs(S,t,\PermDists)) \cdot
(\DF + (1-\DF)\SatProb{i}(S,t,\PermDist{i}))
\cdot 
\JointDensity{\PermDists} d\PermDists.
\label{eqn:engine-payoff}
\end{align}

We call this game the \emph{\SearchGameName}
with parameters $\DF$ and $\TypeDist$, 
and denote it by \SearchGame{\DF}{\TypeDist}.

\subsection{Strategies, Equilibria, and the Price of Anarchy}
\label{sec:equilibria}
We now discuss in more depth the choice to consider
distributions \PermDist{i} as \emph{pure} strategies.
The reason that we cannot simply consider engines as \emph{mixing}
over pure strategies which are \emph{permutations} is that the
engines' payoffs are defined in terms of user behavior, and users'
behaviors depend on the entire probability distributions chosen by the
engines. 

In other words, it is necessary to know the
distributions over pure strategies in order to even define the game's
payoffs. 
Thus, the pure strategies have to be the engines'
distributions \PermDist{i}.
For the same reason, when engines are playing mixed strategies, it is
necessary to assume that the searchers are aware of the pure
strategies drawn by the engines; this would be realistic if the
changes in engines' strategies occurred at a much slower time scale
than the users' adaptation.


We will carefully distinguish nomenclature between pure strategies and
\emph{deterministic} strategies, which --- as defined above --- are
ones that have probability 1 of selecting a particular permutation (or
its equivalence class).
With this understanding in hand, a \emph{(pure) Nash Equilibrium} of
the game is a strategy profile \PermDists in pure strategies from
which no single engine can unilaterally deviate to strictly improve
her utility.
Pure Nash Equilibria may not exist (even for $\DF=0$) ---
we show a simple example of this in 
Appendix~\ref{app:non-existence} ---
so we broaden the space of allowed strategy profiles.

\begin{definition}[Correlated Equilibrium]
Fix a finite probability space $(\Omega,\delta)$.
For each $i = 1, \ldots, \NUMENG$, let 
$\SIGNAL[i]: \Omega \to \mathcal{Z}_i$ be a 
\emph{signal} (random variable), using the same domain, and write
$\SIGNALV = (\SIGNAL[1], \ldots, \SIGNAL[\NUMENG])$.

The strategy of player $i$ is a mapping
$\mu_i$ from $\mathcal{Z}_i$ to the pure strategies of player $i$.
$(\mu_1, \mu_2, \ldots, \mu_{\NUMENG})$ is a 
\emph{correlated equilibrium} iff for each player $i$ and each mapping
$\mu'_i$ from $\mathcal{Z}_i$ to the pure strategies of player $i$, we
have that
\begin{align} \label{eq:correq}
\Expect[\SIGNALV \sim \delta]{\Payoff{i}{(\mu_1(\SIGNAL[1]), \ldots, \mu_i(\SIGNAL[i]), \ldots, \mu_{\NUMENG}(\SIGNAL[\NUMENG]))}}
& \geq
\Expect[\SIGNALV \sim \delta]{\Payoff{i}{(\mu_1(\SIGNAL[1]), \ldots, \mu'_i(\SIGNAL[i]), \ldots, \mu_{\NUMENG}(\SIGNAL[\NUMENG]))}}.
\end{align}
In other words, given the signal he received, each player prefers
playing according to the suggested strategy over any other strategy.
\end{definition}

We denote the set of all Correlated Equilibria by
\CorrEqSet{\DF}{\TypeDist}.
As usual, we omit the parameters when they are clear from the context,
in which case we simply write \CORREQSET.

We measure the quality of a strategy profile by the welfare
of the searchers, i.e., the probability that a user drawn from 
\TypeDist is satisfied by his chosen engine.
The welfare is
\begin{align}
\Welfare{\MixedPermDists}
& = 
\int_{\PermDists}
\sum_{\substack{S \subseteq \PageSet\\ t\geq 0}} 
\Td{S,t} \cdot \sum_i \SelRule[i](\SatProbs(S,t,\PermDists)) \cdot
\SatProb{i}(S,t,\PermDist{i})
\cdot \JointDensity{\PermDists} d\PermDists.
\label{eqn:searcher-utility}
\end{align}

It is well known that equilibria can sometimes be very
inefficient in terms of the welfare they yield.
Let \OPT denote a strategy profile for the engines that 
maximizes the searcher welfare.
The degradation of welfare at equilibrium, compared to this optimum
profile, is typically measured by two quantities:
the (pure, mixed, or correlated) Price of Anarchy (PoA)
\cite{koutsoupias:papadimitriou:anarchy,roughgarden:smoothness}
$\sup_{\PermDists \in \CORREQSET} \frac{\Welfare{\OPT}}{\Welfare{\PermDists}}$
captures the worst-case welfare loss resulting from selfish
rational behavior by the engines;
the Price of Stability (PoS) \cite{ADKTWR}
$\inf_{\PermDists \in \NASH} \frac{\Welfare{\OPT}}{\Welfare{\PermDists}}$
captures the welfare loss incurred by insisting that a strategy
profile must be stable against selfish unilateral deviations.
(Here, the infimum is usually taken over Nash Equilibria \NASH, as
correlated equilibria are more permissive, and would thus lower the
value.)

For our upper bounds on the Price of Anarchy, we will prove
bounds on the \emph{Correlated} Price of Anarchy,
while all our lower bounds will be achieved by pure Nash Equilibria.

\subsection{The \SingletonGameName}
\label{sec:singleton-game}
All of the lower bounds on PoA (or PoS)
which we prove in this paper arise already in a very restricted form
of the \SearchGameName, which we call the \SingletonGameName.
In the \SingletonGameName, 
each user always has a patience threshold of $1$, and desires a
singleton set. Thus, the distribution \TypeDist can be specified
simply as $\Td{n} := \Td{1, \SET{n}}$.
Furthermore, since no user ever looks past the first slot, engine
strategies are equivalent iff they agree in the page in the first
slot. In particular, for the purpose of the game, a distribution
\PermDist{} is completely specified by the probability distribution of
pages in slot 1. We will therefore frequently speak simply of an
engine ``displaying'' or ``choosing'' a page, without specifying a
slot (which is understood to be slot 1).

This distribution is characterized by the probability of
satisfying any given type of user (which can now be identified
with the desired page $n$).
Therefore, for instances of the \SingletonGameName, an engine $i$'s
pure strategy can be fully specified by the probability vector
$\SatProbV{i} = (\SP{i}{n})_{n=1}^N$ of probabilities of displaying
pages $n=1, \ldots, N$.
In this representation, the utility of engine $i$ under a pure
strategy profile is
$\sum_{n \in \Omega} \Td{n} \SelRule[i](\SP{i}{n})
\cdot (\DF + (1-\DF)\SP{i}{n})$.

\subsection{User Behavior and Selection Rules}
\label{sec:selection-rules}
We model the users as stochastic entities, choosing engines via a
probabilistic process that depends on their probability of being
satisfied.  
A selection rule $\SelRule$ captures a
model of user behavior.  We assume the same selection rule is
used by each user,\footnote{We note that the space of selection rules
is convex, so the selection rule in which each user independently randomizes
between using a rule $\SelRule_1$ or an alternative rule $\SelRule_2$
is, itself, a selection rule.} 
and that the effect of the engines' strategies on
user behavior can be captured entirely by the satisfaction probability
profile $\SatProbs$.  
We assume that each $\SelRule[i]$ is monotone non-decreasing in
$\SatProb{i}$ and non-increasing in $\SatProb{i'}$ for each $i' \neq i$.  
That is, an engine's probability of being chosen by a user of type
$(S,t)$ is non-decreasing in the probability that this user will
be satisfied, and non-increasing in the probability that other engines
will satisfy him.
We do not assume that $\SelRule[i]$ is differentiable
or continuous.

This model is very general, encompassing almost all rational
selection rules one could imagine. 
We illustrate it with several examples, 
capturing different models of behavior.

\begin{description}
\item[Proportional Selection]
\emph{Proportional selection} is defined by
$\SelRule[i](\SatProbs) = \frac{\SatProb{i}}{\sum_j \SatProb{j}}$, 
with $\SelRule[i](\mathbf{0}) := 1/\NUMENG$.
In this rule, a user picks engine $i$ with probability
proportional to the probability that he will be satisfied by engine $i$. 
Similar rules have frequently been used in past models that
involved a user selecting a vendor according to the quality of the
vendors' products.

\item[Markovian selection]
The \emph{Markovian selection rule} has
$\SelRule[i](\SatProbs) = \frac{b_i}{\sum_j b_j}$,
where $b_j = \frac{1}{1-\SatProb{j}}$ for each engine $j$.
If any $\SatProb{j}$ is equal to $1$, then 
$\SelRule[i](\SatProbs) = 1/k$ if $\SatProb{i} = 1$, where $k$ is the
number of engines with $\SatProb{j} = 1$, 
and $\SelRule[i](\SatProbs) = 0$ if $q_i < 1$.
This selection rule captures the stationary distribution of a process
in which a user repeatedly uses a given engine until she does not
satisfy him, then switches to another engine uniformly at random. 
(A more general version is defined in Section~\ref{sec:markov}.)

\item[Majority selection rule]
The \emph{Majority selection rule} is defined as follows:
Let $A = \argmax_i \SatProb{i}$ be the set of engines maximizing the
satisfaction probability of a given user.
Then the Majority selection rule sets $\SelRule[i](\SatProbs) = 1/\SetCard{A}$ for all $i \in A$,
and $\SelRule[i](\SatProbs) = 0$ for all $i \notin A$.
The Majority selection rule captures the behavior of a user who
chooses an engine that maximizes his probability of being satisfied,
breaking ties uniformly at random.
\end{description}

\section{Convexity, Determinism, and High Welfare} 
\label{sec:convexity}

In this section, we examine the role of convexity in the search
engines' utility functions. 
Roughly speaking, we show that when the selection rule \SelRule is
convex, each engine's utility is strictly convex.
This convexity, or alternatively, the assumption that $\DF=0$,
implies that every correlated equilibrium is deterministic;
in other words, engines do not mix over distributions of permutations,
but only over permutations.
For pure equilibria, this implies that engines simply play one
permutation (or an equivalence class of permutations)
deterministically.
In turn, we will show that the restriction to deterministic strategies
implies a bound of 2 on the correlated PoA.

In Section~\ref{sec:markov}, we will show that, for a natural
and general class of Markovian user behaviors, the selection rule
\SelRule is indeed convex; hence, for arbitrary \DF, this type of user
behavior yields a correlated PoA of at most 2.

We begin by defining a non-indifference property of selection rules.
Roughly speaking, we wish to exclude selection rules for which a
search engine's probability of satisfying a user has effectively no
impact on that user's selection.
We define the \emph{support} $S(\SatProbs)$ of a satisfaction
probability vector $\SatProbs$ to be the set of engines $i$ for which
$\SatProb{i} > 0$.   
We say that $\SelRule$ is \emph{non-indifferent} if the following two
properties hold.  First,
whenever $\SetCard{S(\SatProbs)} \geq 2$ 
and $\SatProb{j} < 1$ for some $j \in S(\SatProbs)$, 
there exists some $i \in S(\SatProbs)$ such that
$\SelRule[i](\SatProbs) < \SelRule[i](1, \SatProbs_{-1})$.  
That is, if at least two engines would satisfy a user with positive
probability, and not all supporting engines satisfy the user with
probability $1$, then at least one supporting engine would be more
likely to attract the user by sufficiently increasing her probability
of satisfying him. 
Second,
if there exists some $i$ such that $\SatProb{i} = 1$, then 
$\SelRule[j](\SatProbs) = 0$ for all $j \not\in S(\SatProbs)$.
That is, if some engine is guaranteed to satisfy the user, then
the user will not choose an engine that is guaranteed \emph{not}
to satisfy him.
The non-indifference property is quite weak; in particular, it is
significantly weaker than assuming that $\SelRule[i]$ is strictly
increasing in $\SatProb{i}$.
For example, both the Majority and Markovian selection rules described
at the end of Section~\ref{sec:models} are not strictly increasing in the
satisfaction probabilities, but do satisfy non-indifference. 

%
%

\begin{theorem}
\label{thm:convex.deterministic}
Suppose that \TypeDist is in general position, and at least one of the
following three holds for each $i$:
\begin{enumerate}
\item $\DF < 1$, and $\SelRule[i](\SatProb{i}, \SatProbs_{-1})$
 is a convex function of \SatProb{i};
\item $\DF = 1$, and $\SelRule[i](\SatProb{i}, \SatProbs_{-1})$
 is strictly convex in \SatProb{i};
\item $\DF = 0$ and $\SelRule[i]$ is non-indifferent.
\end{enumerate}

Then, each correlated equilibrium of the game
\SearchGame{\DF}{\TypeDist} is deterministic.
\end{theorem}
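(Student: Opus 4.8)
The plan is to reduce the correlated-equilibrium condition to a family of single-engine best-response problems and then exploit convexity. First I would invoke the obedience form of condition~\eqref{eq:correq}: since $(\Omega,\delta)$ is finite, \eqref{eq:correq} holds for all deviations $\mu_i'$ if and only if, for every engine $i$ and every signal value $z_i$ occurring with positive probability, the prescribed pure strategy $\mu_i(z_i)=\PermDist{i}$ maximizes the conditional expected payoff to engine $i$ (call it $U_i(\PermDist{i})$) over all pure strategies of $i$, where the conditioning is on $\SIGNAL[i]=z_i$ and the expectation is over the opponents' correlated play. It therefore suffices to show that every such best response is deterministic.

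Next I would record the geometry of a single engine's decision. For a fixed type $(S,t)$ the satisfaction probability $\SatProb{i}(S,t,\PermDist{i})$ is linear in $\PermDist{i}$, so as $\PermDist{i}$ ranges over all distributions over permutations the satisfaction vector $(\SatProb{i}(S,t))_{(S,t)}$ ranges over a polytope $P_i$, the convex hull of the $\{0,1\}$-vectors realized by individual permutations. Each achievable $\{0,1\}$-vector is a vertex of $P_i$ (being a vertex of the surrounding cube), these vertices are exactly the satisfaction vectors of deterministic strategies, and a non-deterministic strategy realizes a strict convex combination of two or more distinct vertices. Because $\SatProb{i}(S,t)$ enters the payoff only through the single type $(S,t)$, I would write $U_i(\PermDist{i})=\sum_{(S,t)}\Td{S,t}\,\bar f_{S,t}(\SatProb{i}(S,t))\,(\DF+(1-\DF)\SatProb{i}(S,t))$, where $\bar f_{S,t}(x)=\Expect[\SIGNALV_{-i}\mid \SIGNAL[i]=z_i]{\SelRule[i](x,\SatProbs_{-i}(S,t))}$ is the conditional expected selection probability, non-negative and non-decreasing in $x$ by the monotonicity of \SelRule.

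The heart of the argument is that each per-type term is convex, so $U_i$ is convex on $P_i$, and that a deviation to the best supported deterministic strategy is (weakly) improving. In case~2 ($\DF=1$) the factor $(\DF+(1-\DF)x)$ equals $1$ and $\bar f_{S,t}$ is strictly convex as an average of strictly convex functions, hence each term is strictly convex whenever $\Td{S,t}>0$; since two non-equivalent permutations must disagree on a positive-probability type, $U_i$ is strictly convex along the segment joining the corresponding vertices, and strict Jensen gives $U_i(\PermDist{i})<\max_k U_i(v_k)$ over the supported vertices $v_k$, so playing the best supported deterministic strategy strictly improves --- contradicting optimality. In case~1 ($\DF<1$) I would use that a product of two non-negative, non-decreasing, convex functions is convex (the cross term $(g(x)-g(y))(h(x)-h(y))$ has the right sign) to conclude each term, and hence $U_i$, is convex, though not strictly. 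In case~3 ($\DF=0$) convexity can genuinely fail, so instead I would argue by domination: monotonicity of $\bar f_{S,t}$ gives $U_i(\PermDist{i})\le\sum_{(S,t)}\Td{S,t}\,\bar f_{S,t}(1)\,\SatProb{i}(S,t)=\sum_k\lambda_k D_k\le\max_k D_k$, where $\lambda_k$ are the support weights and $D_k$ the payoff of the $k$-th supported deterministic strategy, so again no non-degenerate mixture can beat the best supported deterministic strategy.

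The main obstacle is the boundary case of these inequalities, where a non-degenerate mixture \emph{ties} the best supported deterministic strategy; this can occur in cases~1 and~3 when \SelRule is locally flat, so that $U_i$ is affine along the deviation segment. Here I would invoke that \TypeDist is in general position to exclude exact ties $D_k=D_{k'}$ between distinct supported deterministic strategies: two non-equivalent permutations satisfy different sets of positive-probability types, so $D_k-D_{k'}$ is a nontrivial signed combination of distinct values $\Td{S,t}$, which vanishes only on a measure-zero set of type distributions. The delicate point --- and what I expect to require the most care --- is that the coefficients multiplying the $\Td{S,t}$ are themselves endogenous, since they depend on the opponents' correlated equilibrium play; making the genericity argument rigorous therefore requires either a perturbation argument that freezes the opponents' behaviour or a direct check that the tie condition, as an identity in \TypeDist together with the induced equilibrium quantities, cannot hold under general position. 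The non-indifference hypothesis enters in case~3 precisely to rule out the degenerate situation in which the selection probabilities are so insensitive that every deterministic strategy ties, which is exactly the configuration the domination inequalities cannot otherwise break.
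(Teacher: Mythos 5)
Your overall architecture matches the paper's: reduce the correlated-equilibrium condition to a conditional best-response problem for each engine and signal, decompose the conditional payoff type-by-type into terms of the form $\Td{S,t}\,\bar f_{S,t}(x)\,(\DF+(1-\DF)x)$, and apply Jensen along the mixture over permutations. The genuine gap is in the step you yourself flag as delicate: producing a \emph{strict} improvement when the per-type terms are merely convex (your cases~1 and~3). Your proposed mechanism --- use general position of \TypeDist to exclude ties $D_k=D_{k'}$ between supported deterministic strategies --- does not go through, because $D_k-D_{k'}=\sum_{(S,t)\in X}\Td{S,t}\,w_{S,t}-\sum_{(S,t)\in X'}\Td{S,t}\,w_{S,t}$ with endogenous weights $w_{S,t}=\bar f_{S,t}(1)-\DF\,\bar f_{S,t}(0)\ge 0$ determined by the opponents' equilibrium play; general position controls only the unweighted sums $\sum_{X}\Td{S,t}$ versus $\sum_{X'}\Td{S,t}$, and nonconstant weights (some of which may be $0$) can make the weighted difference vanish. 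You cannot perturb or ``freeze'' the opponents to repair this, since their behaviour is part of the equilibrium under analysis. Moreover, distinctness of the $D_k$ is not even the right target: if $U_i$ is affine along the segment between two tied vertices, a nondegenerate mixture is itself a best response, so what must be shown is strictness of the Jensen inequality itself.

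The paper closes this differently, and the difference is the substance of the proof. For $\DF=0$ it splits into two subcases on the distinguishing types $X\cup X'$: if every opponent satisfies those types with probability $0$ and Jensen is tight, then all relevant selection probabilities collapse to the single constant $\SelRule[i](1,\vc{0})$, so the weights become \emph{common} and general position of \TypeDist does apply, yielding a strictly profitable $\epsilon$-shift of mass between the two non-equivalent permutations; otherwise some opponent $j$ satisfies one of these types with positive probability, and non-indifference guarantees that for \emph{some} supporting engine (possibly $j\neq i$) the selection probability strictly responds, giving strictness of Jensen for \emph{that} engine --- and the proof restarts with that engine in the role of $i$. This ``switch engines'' move is essential because non-indifference only promises responsiveness for some supporting engine, not for the one you started with; your single-engine tie-breaking framework has no analogue of it. (In case~1 the paper asserts that multiplying a convex nondecreasing $\SelRule[i]$ by the increasing affine factor gives strict convexity; your worry about flat regions of $\SelRule[i]$ is legitimate there, but the remedy is again a common-coefficient/general-position argument on the flat segment, not generic distinctness of the $D_k$.) As written, your proposal leaves the decisive strictness step unproven.
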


\begin{proof}
%
The high-level idea of the proof is the following:
if selection rules are convex or $\DF = 0$, an engine 
can increase her utility by choosing two outcomes she randomizes between,
deciding which one contributes more to the overall utility, and increasing the
probability of returning that outcome at the expense of the other.
The exact calculations are somewhat cumbersome.

Fix a correlated equilibrium \CorrEq, and assume for contradiction
that the support of \CorrEq is not entirely deterministic.
Fix an engine $i$ and a signal \SIGNAL[i] given to $i$ with positive
probability, such that conditioned on receiving \SIGNAL[i], engine $i$
plays a non-deterministic strategy \PermDist{i}.
From engine $i$'s viewpoint, conditioned on \SIGNAL[i], the other
engines $i' \neq i$ are playing mixed strategies, randomizing in
possibly correlated ways over possibly non-deterministic strategies.
Let $\JointDensity[-i]{\PermDists_{-i}}$ denote the joint density over the
strategies played by engines $i' \neq i$, conditioned on \SIGNAL[i].

Let \TypeSupp denote the support of \TypeDist, and consider a type
$(S,t) \in \TypeSupp$.
When engine $i$ satsifies users of type $(S,t)$ with probability $x$, 
its expected utility conditioned on the user type being $(S,t)$ is
\begin{align*}
g_i(x; S,t)
& = 
\int_{\PermDists_{-i}} 
\SelRule[i]((x, \SatProbs_{-i}(S,t,\PermDists_{-i}))) \cdot
(\DF + (1-\DF) \cdot x) \cdot
\JointDensity[-i]{\PermDists_{-i}} d \PermDists_{-i}.
\end{align*}

If $\SelRule[i]((x, \SatProbs_{-i}(S,t,\PermDists_{-i})))$ is a convex
function of $x$ and $\beta < 1$, 
or if $\SelRule[i]$ is strictly convex in $x$, then for each $(S,t)$, 
$\SelRule[i]((x, \SatProbs_{-i}(S,t,\PermDists_{-i}))) \cdot
(\DF + (1-\DF) \cdot x)$ is a strictly convex function of $x$, and
hence so is $g_i(x; S,t)$.
Therefore, under either of the first two assumptions of the theorem, 
$g_i(\cdot; S,t)$ is strictly convex.
(The analysis in the third case $\DF=0$ will proceed similarly, but
not explicitly be based on strict convexity.)

The utility of engine $i$ from playing at this equilibrium is
\begin{align*}
\Payoff{i}{\PermDists}
& = 
\sum_{S \subseteq \PageSet} \sum_{t \geq 0} 
\Td{S,t} \cdot
g_i(\SatProb{i}(S,t,\PermDist{i}); S,t). 
\end{align*}

Engine $i$ plays the non-deterministic strategy \PermDist{i}.
For each permutation \Perm, let $\PermDist{i}(\Perm)$ be the
probability that engine $i$ chooses the permutation \Perm. 
Recall from Section~\ref{sec:game-setup}
that
\begin{align}
\SatProb{i}(S,t,\PermDist{i}) 
& = \sum_{\Perm} \PermDist{i}(\Perm) \cdot
\Indicator{\PermS{t} \cap S \neq \emptyset};
\label{eqn:convex-satisfaction-prob}
\end{align}
thus, $\SatProb{i}(S,t,\PermDist{i})$ is a convex combination of 
the $\Indicator{\PermS{t} \cap S \neq \emptyset}$ terms.
In the first two cases of the theorem, we can use the 
convexity of $g_i$ to obtain that
\begin{align}
g_i(\SatProb{i}(S,t,\PermDist{i}); S,t)
& \leq \sum_{\Perm} \PermDist{i}(\Perm) \cdot
g_i(\Indicator{\PermS{t} \cap S \neq \emptyset}; S,t).
\label{eqn:convex-bound}
\end{align}
In the third case ($\DF = 0$, no assumptions on \SelRule), 
Inequality~\eqref{eqn:convex-bound} holds for a slightly different
reason: we can write
\begin{align*}
g_i(\SatProb{i}(S,t,\PermDist{i}); S,t)
& = \int_{\PermDists_{-i}} 
\SelRule[i]((\SatProb{i}(S,t,\PermDist{i}),
             \SatProbs_{-i}(S,t,\PermDists_{-i}))) 
\cdot \SatProb{i}(S,t,\PermDist{i}) \cdot
\JointDensity[-i]{\PermDists_{-i}} d \PermDists_{-i}\\
& \leq \int_{\PermDists_{-i}} 
\SelRule[i]((1, \SatProbs_{-i}(S,t,\PermDists_{-i}))) 
\cdot \sum_{\Perm} \PermDist{i}(\Perm) \cdot
                   \Indicator{\PermS{t} \cap S \neq \emptyset}
\cdot \JointDensity[-i]{\PermDists_{-i}} d \PermDists_{-i}\\
& = \sum_{\Perm} \PermDist{i}(\Perm) \cdot 
\int_{\PermDists_{-i}} 
\SelRule[i]((\Indicator{\PermS{t} \cap S \neq \emptyset}, 
             \SatProbs_{-i}(S,t,\PermDists_{-i}))) 
\cdot \Indicator{\PermS{t} \cap S \neq \emptyset}
\cdot \JointDensity[-i]{\PermDists_{-i}} d \PermDists_{-i}\\
& = \sum_{\Perm} \PermDist{i}(\Perm) \cdot
g_i(\Indicator{\PermS{t} \cap S \neq \emptyset}; S,t).
\end{align*}
For the inequality, we used monotonicity of the selection rule.
In the penultimate step, we used that whenever 
$\Indicator{\PermS{t} \cap S \neq \emptyset} \neq 1$, then
$\Indicator{\PermS{t} \cap S \neq \emptyset} = 0$, so the whole
expression for the particular \Perm is 0 anyway.

Next, we argue that Inequality~\eqref{eqn:convex-bound} is in fact
strict for at least one pair $(S,t)$ with $\Td{S,t} > 0$, in all
cases.
Because \PermDist{i} is not a deterministic strategy,
there exist two non-equivalent permutations 
$\Perm, \Perm'$ with $\PermDist{i}(\Perm) \in (0,1)$ and
$\PermDist{i}(\Perm') \in (0,1)$.
Let $X \subseteq \TypeSupp$ be the set of types 
satisfied by $\Perm$ but not by $\Perm'$,
and $X' \subseteq \TypeSupp$ the set of types 
satisfied by $\Perm'$ but not by $\Perm$; 
note that $X \cup X' \neq \emptyset$ by non-equivalence.
We then have $0 < \SatProb{i}(S,t,\MatchDist{i}) < 1$ 
for all $(S,t) \in X \cup X'$.
Therefore, in Equation~\eqref{eqn:convex-satisfaction-prob} for
those types $(S,t)$, there are both terms with
$\Indicator{\PermS{t} \cap S \neq \emptyset} = 0$ and with
$\Indicator{\PermS{t} \cap S \neq \emptyset} = 1$.
Thus, for the first two cases (when $g_i(\cdot; S, t)$ is strictly
convex), Inequality~\eqref{eqn:convex-bound} is strict for this
$(S,t)$ pair.\footnote{It would not have been strict if all indicator
  terms had been equal.}

Again, the reason for the strictness is slightly different in the
third case. Focus on the searcher types $(S,t) \in X \cup X'$.
Assume for contradiction that $\SatProb{j}(S,t,\MatchDist{j}) = 0$ for
all $j \neq i$ and $(S,t) \in X \cup X'$, and that
Inequality~\eqref{eqn:convex-bound} is not strict.
Thus, for each $(S,t), (S',t') \in X \cup X'$,
\begin{align*} 
\SelRule[i](\SatProbs(S,t,\PermDists)) 
& = \SelRule[i](1, \vc{0}) 
\; = \; \SelRule[i](\SatProbs(S',t',\PermDists)).
\end{align*}

Because $\Gamma$ is in general position, we know
$\sum_{(S,t) \in X} \Td{S,t} \neq \sum_{(S,t) \in X'} \Td{S,t}$;
w.l.o.g., assume that the left-hand side is larger.
Then, if engine $i$ were to increase the probability of the
permutation \Perm by some $\epsilon > 0$ while reducing the
probability of $\Perm'$ by $\epsilon$, 
her resulting change in utility would be 
\begin{multline*} 
\epsilon \cdot \Big( 
\sum_{(S,t) \in X} \Td{S,t} \SelRule[i](\SatProbs(S,t,\PermDists)) 
- \sum_{(S,t) \in X'} \Td{S,t} \SelRule[i](\SatProbs(S,t,\PermDists))
\Big) \\
 = \epsilon \cdot \SelRule[i](1, \vc{0}) \cdot
\Big( \sum_{(S,t) \in X} \Td{S,t} - \sum_{(S,t) \in X'} \Td{S,t} \Big)
\; > \; 0.
\end{multline*}
Engine $i$ could thus increase her utility by shifting
probability from $\Perm'$ to $\Perm$,
contradicting the equilibrium assumption.
We conclude that there must be 
another engine $j$ with $\SatProb{j}(S,t,\MatchDist{j}) > 0$.
Now, the non-indifference of the selection rules implies that for
at least one such engine $j$ (possibly $j=i$), 
$\SelRule[j](\SatProbs) < \SelRule[j](1, \SatProbs_{-j})$;
for this particular engine $j$, this implies
strictness of Inequality~\eqref{eqn:convex-bound}.
The remainder of the proof now proceeds with this engine (formerly
$j$) as engine $i$.

Finally, summing up the strict Inequality~\eqref{eqn:convex-bound}
over all types $(S,t) \in \TypeSupp$
and changing orders of summation, we obtain that 
\begin{align}
\Payoff{i}{\PermDists}
& < 
\sum_{\Perm} \PermDist{i}(\Perm) \cdot
\sum_{S \subseteq \PageSet} \sum_{t \geq 0} 
g_i(\Indicator{\PermS{t} \cap S \neq \emptyset}; S,t).
\label{eqn:strict-utility-inequality}
\end{align}

The right-hand side of
Inequality~\eqref{eqn:strict-utility-inequality} is exactly the
utility of engine $i$ when \emph{mixing} over deterministic strategies
instead of playing a non-deterministic strategy. 
Thus, \PermDist{i} cannot have been engine $i$'s best strategy at the
correlated equilibrium under signal \SIGNAL[i], a contradiction.
\end{proof}


\subsection{An Upper Bound on the Correlated Price of Anarchy}
\label{sec:convex-PoA-bound}

We now show that if engines apply deterministic strategies at equilibrium,
the resulting outcomes will generate high social welfare.  Combining this result
with Theorem \ref{thm:convex.deterministic}, we conclude that the PoA
of \SearchGame{\DF}{\TypeDist} is at most $2$ under the conditions
of Theorem \ref{thm:convex.deterministic}.
In fact, we prove a slightly better bound of
$\frac{2\NUMENG-1-\DF}{\NUMENG-\beta}$ on the PoA, and show that this 
bound is tight for all $\DF \in [0,1]$ and $\NUMENG \geq 2$.

\begin{theorem} \label{thm:bounded-PoA-for-deterministic}
Suppose that \SelRule is non-indifferent
and every correlated equilibrium of \SearchGame{\DF}{\TypeDist} is
deterministic. 
Then, the correlated PoA of \SearchGame{\DF}{\TypeDist}
is at most $\frac{2\NUMENG-1-\DF}{\NUMENG-\beta}$. 
\end{theorem}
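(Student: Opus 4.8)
The plan is to use the theorem's hypothesis that every correlated equilibrium is deterministic, which collapses each user type to a binary covered/uncovered indicator and turns welfare into a \emph{coverage} function, and then to run a deviation (smoothness-style) argument in which each engine contemplates mimicking the welfare-optimal profile. First I would fix a correlated equilibrium and condition on a realized deterministic profile \Perms. For any type $(S,t)$ each \SatProb{i} is then $0$ or $1$; let $A(S,t)$ be the set of engines that satisfy $(S,t)$. The second clause of non-indifference forces all selection mass onto $A(S,t)$ whenever $A(S,t)\neq\emptyset$, so the realized welfare is exactly the covered mass $W$, and the optimum welfare $O$ is the largest coverage attainable by \NUMENG deterministic strategies (randomizing cannot push any type's contribution above $1$, so $O$ is a coverage maximum). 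The same clause gives $\sum_{i\in A(S,t)}\SelRule[i]=1$ on covered types and $\sum_i\SelRule[i](\vc 0)=1$ on uncovered ones, so summing the payoffs yields $\sum_i \Payoff{i}{\Perms}=W+\DF\,(1-W)$: total engine utility is the equilibrium welfare plus a \DF-weighted share of the uncovered mass $1-W$. (For a genuinely correlated equilibrium I would carry everything out in expectation over its signal/profile distribution; the deviations below are signal-independent and all quantities are linear in the realized profile.)

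Next comes the deviation. Let $U$ be the set of types the optimum covers but the equilibrium leaves uncovered, so that $O\le W+\gamma(U)$ and it suffices to bound $\gamma(U)$. Assign each optimum-covered type to a single covering engine; let $\hat\sigma_i$ be engine $i$'s optimal strategy, $A_i$ the types it covers at equilibrium, and $U_i\subseteq U$ the equilibrium-uncovered types it covers in the optimum. Consider engine $i$ switching to $\hat\sigma_i$. On a type in $U_i$ engine $i$ becomes the \emph{unique} coverer over an otherwise all-zero profile, so non-indifference gives $\SelRule[i](1,\vc 0)=1$ and its contribution jumps from $\DF\,\SelRule[i](\vc 0)$ to $1$, a gain of $1-\DF\,\SelRule[i](\vc 0)$ per unit of probability mass; the only losses come from coverage it abandons, totalling at most its equilibrium selection share $\sum_{(S,t)\in A_i}\Td{S,t}\,\SelRule[i]$ on those types. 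Since no deviation is profitable, for each $i$ we obtain $\sum_{(S,t)\in U_i}\Td{S,t}\bigl(1-\DF\,\SelRule[i](\vc 0)\bigr)\le\sum_{(S,t)\in A_i}\Td{S,t}\,\SelRule[i]$. Crucially, $\SelRule[i](\vc 0)$ is a single constant per engine (the all-zero profile is the same for all types), and these constants satisfy $\sum_i\SelRule[i](\vc 0)=1$.

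Summing these inequalities over the engines and using $\sum_i\sum_{(S,t)\in A_i}\Td{S,t}\,\SelRule[i]=W$ gives, after the crude estimate $\SelRule[i](\vc 0)\le 1$, only the loose bound $\gamma(U)(1-\DF)\le W$ — correct, but degrading as $\DF\to 1$. The hard part will be extracting the \emph{exact} constant $\tfrac{2\NUMENG-1-\DF}{\NUMENG-\DF}$, equivalently proving $(\NUMENG-\DF)\,\gamma(U)\le(\NUMENG-1)\,W$. Two genuinely nontrivial refinements are needed, as the tight instance — all \NUMENG engines piling on one popular page while the optimum spreads across \NUMENG pages — makes clear. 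First, the engine that covers in the optimum a type that is \emph{also} covered at equilibrium has $U_i=\emptyset$ and must be excluded from the loss budget; this overlap is exactly what turns the numerator \NUMENG into $\NUMENG-1$, so the charging of equilibrium utility to uncovered optimum mass has to be arranged so that it is not spent. Second, because the selection rule need not be symmetric, the constants $\SelRule[i](\vc 0)$ are not $1/\NUMENG$ and can be concentrated on a single engine, so the \DF-dependent correction $\DF\sum_i\SelRule[i](\vc 0)\,\gamma(U_i)$ must be controlled \emph{globally}, using only the aggregate identities $\sum_i\SelRule[i](\vc 0)=1$ and $\sum_{i\in A}\SelRule[i]=1$ together with monotonicity, rather than any per-engine value — it is precisely the global use of $\sum_i\SelRule[i](\vc 0)=1$ that both produces the factor $\NUMENG-\DF$ and keeps the bound finite (equal to $2$) as $\DF\to1$.

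Getting these two bookkeeping steps exactly right is the crux; everything else is linear algebra over the coverage indicators. The remaining pieces are routine: handling types covered by several engines at equilibrium (an abandoning engine forfeits only its share, which only strengthens the per-engine inequality), verifying that randomization cannot help the optimum beyond deterministic coverage, and integrating the whole computation against the correlated-equilibrium distribution, which is legitimate since the deviation ``play $\hat\sigma_i$'' ignores the signal and the payoff and welfare expressions are linear in the realized profile. Combining $(\NUMENG-\DF)\gamma(U)\le(\NUMENG-1)W$ with $O\le W+\gamma(U)$ then yields $O\le\tfrac{2\NUMENG-1-\DF}{\NUMENG-\DF}\,W$, which is the claimed bound on the correlated Price of Anarchy.
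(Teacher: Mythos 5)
Your setup is sound and close in spirit to the paper's: the deterministic collapse to coverage indicators, non-indifference forcing all selection mass onto covering engines, and a deviation-to-the-optimum argument are all the right ingredients, and the weak bound $\gamma(U)(1-\DF)\le W$ you derive along the way is correct. But the proof is not complete: you explicitly defer the step that produces the actual constant --- establishing $(\NUMENG-\DF)\,\gamma(U)\le(\NUMENG-1)\,W$ --- calling the two refinements it requires ``the crux'' without carrying either one out. This is not mere bookkeeping. Your route has each engine deviate only to \emph{its own} assigned optimal permutation $\OptPerm[i]$, which yields just \NUMENG inequalities; after summing them you are left needing per-engine control of $\SelRule[i](\vc 0)$ (which, as you note, can be concentrated on one engine for an asymmetric rule) and a reason why exactly one engine's equilibrium utility can be dropped from the loss budget --- and when the optimum's coverage overlaps little with the equilibrium's, no engine need have $U_i=\emptyset$, so the exclusion you rely on is unavailable. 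Moreover, the intermediate inequality you target is not what the paper establishes: the paper proves a bound of the form $\Welfare{\OPT}\le\frac{\NUMENG+\lambda(\NUMENG-1-\DF)}{\NUMENG-\DF}\,\Welfare{\JOINTDENSITY}$, where $\lambda\in[0,1]$ measures the overlap $\sum_{S,t}\Td{S,t}(1-\mu_{S,t})\Indicator{\exists j\colon\OptPermS[j]{t}\cap S\neq\emptyset}\,/\,\Welfare{\JOINTDENSITY}$, and only then takes the worst case $\lambda=1$; in the low-overlap regime the correct bound is the \emph{stronger} $\frac{\NUMENG}{\NUMENG-\DF}$, not a consequence of your decomposition.

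The missing idea is the symmetrization over deviations. In the paper, every engine $i$ contemplates deviating to \emph{every} optimal permutation $\OptPerm[j]$, giving $\NUMENG^2$ inequalities. Summing over $i$ for a fixed $j$ lets one use only the aggregate identities $\sum_i\SelRule[i](\vc 1)=1$ and $\sum_i\SelRule[i](\vc 0)\le 1$ --- no per-engine information about an asymmetric rule is ever needed --- and it multiplies the ``uniquely attracted'' term by \NUMENG; summing afterwards over $j$ converts $\sum_j\Indicator{\OptPermS[j]{t}\cap S\neq\emptyset}\ge\Indicator{\exists j\colon\OptPermS[j]{t}\cap S\neq\emptyset}$ into the optimum's welfare. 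These two summations are precisely what dissolve the two difficulties you flag; without them your argument stalls at the $\frac{1}{1-\DF}$ bound, which is vacuous as $\DF\to1$.
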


\begin{proof}
Consider a correlated deterministic equilibrium of \SearchGame{\DF}{\TypeDist},
characterized by a joint distribution \JointDensity{\Perms}
over vectors of permutations \Perms.
Let $\OptPerms$ be the vector of permutations for each engine under
the socially optimal profile.
Focus on one engine $i$ and one signal \SIGNAL[i] she receives. 
Let \Perm[i] be the permutation played by $i$ when seeing \SIGNAL[i].
From the perspective of $i$, conditioned on receiving
\SIGNAL[i], the other engines are playing a correlated distribution 
$\JointDensity[-i]{\Perms_{-i}}$ over deterministic strategies.

Let
$\mu_{S,t}^{\SIGNAL[i]} = \ProbC{\forall i': \PermS[i']{t} \cap S = \emptyset}{\SIGNAL[i]}$
be the probability that no engine $i'$ (including $i$) satisfies the
searcher type $(S,t)$, given that $i$ received the signal \SIGNAL[i].
Consider a deviation by $i$ to any $\OptPerm[j]$.
This will certainly attract any searcher who would not be satisfied by
any other engine, but would be satisfied by $\OptPerm[j]$. 
Also, any other searcher satisfied by $\OptPerm[j]$ would select
engine $i$ with probability at least $\SelRule[i](\vc{1})$,
by the assumed monotonicity of the selection rule.  
Finally, this deviation will
not affect the behavior of searchers that are not satisfied by any engine
or $\OptPerm[j]$.
Putting this together,
the expected conditional utility of this deviation is at least
\begin{align*}
\mu_j 
& = \sum_{S \subseteq \PageSet} \sum_{t \geq 0} 
\Td{S,t} \cdot \Indicator{\OptPermS[j]{t} \cap S \neq \emptyset}
\cdot \mu_{S,t}^{\SIGNAL[i]} \\
& + \sum_{S \subseteq \PageSet} \sum_{t \geq 0}
\Td{S,t} \cdot \SelRule[i](\vc{1}) \cdot \Indicator{\OptPermS[j]{t} \cap S \neq \emptyset}
\cdot (1-\mu_{S,t}^{\SIGNAL[i]}) \\
& +
\beta \cdot \sum_{S \subseteq \PageSet} \sum_{t \geq 0} 
\Td{S,t} \cdot \SelRule[i](\vc{0}) \cdot \Indicator{\OptPermS[j]{t} \cap S = \emptyset}
\cdot \mu_{S,t}^{\SIGNAL[i]}.
\end{align*}

Since engine $i$, receiving the signal \SIGNAL[i], does not want to
deviate to \OptPerm[j] at the correlated equilibrium, we have

\begin{align*}
\mu_j & \leq 
\int_{\Perms_{-i}}
\sum_{S \subseteq \PageSet} \sum_{t \geq 0} 
\Td{S,t} \cdot \SelRule[i](\SatProbs(S,t,\Perms)) 
\cdot \SatProb{i}(S,t,\Perms)
\cdot \JointDensity[-i]{\Perms_{-i}} d\Perms_{-i} 
+ \DF \cdot \sum_{S \subseteq \PageSet} \sum_{t \geq 0} 
\Td{S,t} \cdot \SelRule[i](\vc{0})
\cdot \mu_{S,t}^{\SIGNAL[i]}.
\end{align*}
The first term is the utility that $i$ obtains from
searchers she attracts and satisfies, while the second term is the
utility obtained from searchers that are not satisfied
by any engine, and visit engine $i$ (failing to be satisfied).
Taking an expectation over signals $Z_i$,
writing $\mu_{S,t} = \Expect[Z_i \sim \MarginalDensity{i}]{\mu_{S,t}^{Z_i}}$, 
and rearranging the terms with \DF factors, we conclude
\begin{align*}
& \sum_{S \subseteq \PageSet} \sum_{t \geq 0} 
\Td{S,t} \cdot \Indicator{\OptPermS[j]{t} \cap S \neq \emptyset}
\cdot \mu_{S,t} 
+ \sum_{S \subseteq \PageSet} \sum_{t \geq 0}
\Td{S,t} \cdot \SelRule[i](\vc{1}) \cdot \Indicator{\OptPermS[j]{t} \cap S \neq \emptyset}
\cdot (1-\mu_{S,t}) \\
& \leq 
\int_{\Perms}
\sum_{S \subseteq \PageSet} \sum_{t \geq 0} 
\Td{S,t} \cdot \SelRule[i](\SatProbs(S,t,\Perms)) 
\cdot \SatProb{i}(S,t,\Perms)
\cdot \JointDensity{\Perms} d\Perms 
\\ & \qquad 
+ \DF \cdot \sum_{S \subseteq \PageSet} \sum_{t \geq 0} 
\Td{S,t} \cdot \SelRule[i](\vc{0}) \cdot \Indicator{\OptPermS[j]{t} \cap S \neq \emptyset}
\cdot \mu_{S,t}.
\end{align*}
Taking a sum over all $i$, and applying the definition of equilibrium
welfare $\Welfare{\JOINTDENSITY}$, we get
\begin{align*}
& \NUMENG \cdot \sum_{S \subseteq \PageSet} \sum_{t \geq 0} 
\Td{S,t} \cdot \Indicator{\OptPermS[j]{t} \cap S \neq \emptyset}
\cdot \mu_{S,t} 
+ \sum_{S \subseteq \PageSet} \sum_{t \geq 0} 
\Td{S,t} \cdot \Indicator{\OptPermS[j]{t} \cap S \neq \emptyset}
\cdot (1-\mu_{S,t}) \cdot \sum_i \SelRule[i](\vc{1}) \\
& \leq 
\Welfare{\JOINTDENSITY}
+ \DF \cdot \sum_{S \subseteq \PageSet} \sum_{t \geq 0} 
\Td{S,t} \cdot \mu_{S,t} \cdot \Indicator{\OptPermS[j]{t} \cap S \neq \emptyset} 
\cdot \sum_i \SelRule[i](\vc{0}).
\end{align*}
Using $\sum_i \SelRule[i](\vc{0}) \leq 1$, $\sum_i \SelRule[i](\vc{1}) = 1$,
and summing over all $j$,
\begin{align*}
\NUMENG \cdot \Welfare{\JOINTDENSITY} 
& \geq 
(\NUMENG - \DF) \cdot \sum_{S \subseteq \PageSet} \sum_{t \geq 0} 
\Td{S,t} 
\mu_{S,t} \sum_j \Indicator{\OptPermS[j]{t} \cap S \neq \emptyset}
\\ & \qquad 
+ \sum_{S \subseteq \PageSet} \sum_{t \geq 0} 
\Td{S,t} (1-\mu_{S,t}) \sum_j \Indicator{\OptPermS[j]{t} \cap S \neq \emptyset} \\
& \geq 
(\NUMENG - \DF) \cdot \sum_{S \subseteq \PageSet} \sum_{t \geq 0} 
\Td{S,t} \mu_{S,t} \cdot \Indicator{\exists j \colon \OptPermS[j]{t}
  \cap S \neq \emptyset} 
\\ & \qquad 
+ \sum_{S \subseteq \PageSet} \sum_{t \geq 0} 
\Td{S,t} (1-\mu_{S,t}) \cdot \Indicator{\exists j \colon \OptPermS[j]{t} \cap S \neq \emptyset}  \\
& = (\NUMENG - \DF)\left( \Welfare{\OPT} - \lambda \cdot \Welfare{\JOINTDENSITY} \right) + \lambda \cdot \Welfare{\JOINTDENSITY}
\end{align*}
for some $\lambda \in [0,1]$.  
The last equality follows because
\begin{align*}
\sum_{S \subseteq \PageSet} \sum_{t \geq 0} 
\Td{S,t} \cdot \Indicator{\exists j \colon \OptPermS[j]{t} \cap S \neq \emptyset}
& = \Welfare{\OPT} \quad \mbox{ and}\\
\sum_{S \subseteq \PageSet} \sum_{t \geq 0} \Td{S,t} (1-\mu_{S,t}) 
& \leq \Welfare{\JOINTDENSITY}.
\end{align*}
Rearranging, we have 
$\Welfare{\JOINTDENSITY} 
\geq \frac{\NUMENG - \DF}{(1+\lambda)\NUMENG - \lambda(\DF+1)} \cdot \Welfare{\OPT}$, 
with $\lambda \in [0,1]$.
As $\NUMENG \geq 2$ and $\DF \in [0,1]$, this 
expression is minimized at $\lambda = 1$; hence,
$\Welfare{\JOINTDENSITY} \geq \frac{\NUMENG - \DF}{2\NUMENG - \DF-1}\Welfare{\OPT}$,
as claimed.
\end{proof}

We now show that the PoA bound from Theorem
\ref{thm:bounded-PoA-for-deterministic} is tight for all 
$\NUMENG \geq 2$ and $\DF \in [0,1]$, 
even for the pure PoA of the \SingletonGameName with a convex selection rule.

\begin{proposition}
\label{prop:poa.n}
Consider the \SingletonGameName with any $\DF \in [0,1]$ and any
symmetric, strictly convex selection rule.  
There are instances in which the 
PoA is at least $\frac{2\NUMENG-1-\DF}{\NUMENG-\DF}$.
\end{proposition}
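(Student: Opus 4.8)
The plan is to exhibit, for every $\NUMENG \geq 2$ and $\DF \in [0,1]$, one instance of the \SingletonGameName together with a pure equilibrium whose welfare is a factor $\frac{2\NUMENG-1-\DF}{\NUMENG-\DF}$ below the optimum. The instance I would use has exactly $\NUMENG$ pages: a ``popular'' page $1$ with demand $\Td{1} = (\NUMENG-\DF)\,d$, and pages $2,\ldots,\NUMENG$ each with demand $\Td{n} = d$, where $d>0$ is an arbitrary scale. The candidate bad equilibrium is the completely unspecialized profile in which every engine deterministically displays page $1$.

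First I would compute the two welfares. At the candidate profile, a user desiring page $1$ is satisfied with probability $1$ (whichever engine he picks displays page $1$), while users desiring any other page are never satisfied; hence the equilibrium welfare equals $\Td{1} = (\NUMENG-\DF)\,d$. For the optimum, the profile in which engine $i$ displays page $i$ satisfies \emph{every} user: a user desiring page $n$ faces a satisfaction-probability vector with a single $1$, in the coordinate of the unique covering engine, and selects that engine with probability $1$. Thus $\Welfare{\OPT} = \sum_n \Td{n} = (\NUMENG-\DF)\,d + (\NUMENG-1)\,d = (2\NUMENG-1-\DF)\,d$, and since no profile can satisfy more than all users this is optimal. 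The ratio is exactly $\frac{2\NUMENG-1-\DF}{\NUMENG-\DF}$.

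The substantive step is verifying that the unspecialized profile is a Nash equilibrium. Because the selection rule is strictly convex, each engine's payoff is a separable strictly convex function of its own display distribution over the simplex of pages, so its best response is attained at a vertex, i.e., a deterministic single-page strategy; this is the same convexity mechanism underlying Theorem~\ref{thm:convex.deterministic}, and it reduces the verification to ruling out deterministic deviations. By symmetry of the rule it suffices to consider one engine $i$ moving from page $1$ to a page $m \in \{2,\ldots,\NUMENG\}$. Using symmetry to evaluate $\SelRule[i](\vc{1}) = \SelRule[i](\vc{0}) = 1/\NUMENG$, engine $i$'s equilibrium payoff is $\frac{\Td{1}}{\NUMENG} + \frac{\DF}{\NUMENG}\sum_{n\geq 2}\Td{n}$, while after deviating it becomes the sole coverer of page $m$ (captured with probability $1$) but loses all page-$1$ users to the remaining coverers, giving $\Td{m} + \frac{\DF}{\NUMENG}\sum_{n\neq 1,m}\Td{n}$. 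The no-deviation condition simplifies to $\Td{1} \geq (\NUMENG-\DF)\,\Td{m}$, which holds with equality for our demands; hence the profile is a (weak) pure Nash equilibrium and the price of anarchy is at least the claimed ratio.

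I expect the main obstacle to be the careful evaluation of the selection rule at the ``extreme'' satisfaction-probability vectors appearing in both the optimum and the deviation: that a user who faces a single engine certain to satisfy him (with all other engines having satisfaction probability $0$) picks that engine with probability $1$, and that the deviating engine is consequently abandoned by all page-$1$ users. This is precisely the second clause of the non-indifference property; it holds for the natural strictly convex symmetric rules (in particular the Markovian rule) and is exactly the regime in which the matching upper bound of Theorem~\ref{thm:bounded-PoA-for-deterministic} applies, so I would invoke it explicitly rather than derive it from convexity alone. A minor technical point is that equality in the no-deviation condition makes the equilibrium only weakly stable, but this still certifies a genuine pure Nash equilibrium attaining the bound, so no limiting argument is required.
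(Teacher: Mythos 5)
Your proposal is correct and follows essentially the same route as the paper: the same instance (your demands with $d=\tfrac{1}{2\NUMENG-1-\DF}$ are exactly the paper's $\Td{1}=\tfrac{\NUMENG-\DF}{2\NUMENG-1-\DF}$, $\Td{n}=\tfrac{1}{2\NUMENG-1-\DF}$), the same all-display-page-1 equilibrium, the same convexity-based reduction to deterministic deviations, and the same no-deviation computation yielding equality. Your explicit flagging of the second non-indifference clause for the extreme satisfaction vectors is a fair (and slightly more careful) articulation of an assumption the paper's proof also uses implicitly.
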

\begin{proof}
In the tight instance, there are \NUMENG pages,
and $\Td{1} = \frac{\NUMENG-\DF}{2\NUMENG-1-\DF}$, while
$\Td{n} = \frac{1}{2\NUMENG-1-\DF}$ for all $n > 1$.
Consider the profile in which each engine displays page 1.
For any symmetric selection rule (convex or otherwise), each engine
has utility 
\begin{align*}
\frac{\Td{1}}{\NUMENG} + \DF \cdot \sum_{n > 1}\frac{\Td{n}}{\NUMENG} 
& = \frac{\DF(\NUMENG-2)+\NUMENG}{\NUMENG(2\NUMENG-\DF-1)}.
\end{align*}
Since the selection rule is strictly convex, the proof of Theorem
\ref{thm:convex.deterministic} shows that the utility-maximizing
strategy of each engine is a deterministic strategy. 
Since the selection rule is symmetric and non-indifferent, the utility
of switching to another page $n' > 1$ is equal to
\begin{align*}
\Td{n'} + \DF \cdot \sum_{\substack{n>1\\n \neq n'}}\frac{\Td{n}}{\NUMENG} 
& = \frac{\DF(\NUMENG-2)+\NUMENG}{\NUMENG(2\NUMENG-\DF-1)}.
\end{align*}
Hence, the strategy profile is an equilibrium.
(In fact, by increasing \Td{1} by a very small amount, we can
  ensure that this strategy is the unique equilibrium.)
It satisfies a $\frac{\NUMENG-\DF}{2\NUMENG-1-\DF}$
fraction of users.
On the other hand, the profile in which engine $i$ displays page $i$
satisfies all users. 
The PoA in this example is therefore $\frac{\NUMENG-\DF}{2\NUMENG-1-\DF}$.
\end{proof}


\section{Symmetric Equilibria and Low Welfare}
\label{sec:sym}

Next, we analyze equilibria of the \SearchGameName when
the conditions of Theorem~\ref{thm:convex.deterministic} do not apply;
that is, when $\DF > 0$ and the selection rule is not convex.
We will show that the resulting misalignment of incentives can lead to equilibria
with very poor social welfare, i.e., high PoA.
All of our results are proved already for the
\SingletonGameName\footnote{The strong lower bounds continue
    to hold under the (realistic) assumption that $t \ll \NUMPAGES$. 
    When $t \approx \NUMPAGES$, searchers will typically see most
    pages, so good welfare is obtained.}.

\subsection{When Satisfaction is Irrelevant: $\DF = 1$}
\label{sec:beta1}

We begin by considering the case $\DF = 1$.
Our first result is that when the selection rule \SelRule is symmetric
and satisfies a cross-concavity assumption, every pure Nash
Equilibrium of the \SingletonGameName is \emph{symmetric}.
In other words, search engines do not specialize: all engines apply
the same strategy. A user can thus do no better than selecting an
engine arbitrarily, and hence competition does not positively
impact social welfare.

\subsection{Symmetric Equilibria}
\label{sec:symmetric}
We show that, under certain cross-concavity conditions on the
selection rule \SelRule, all pure Nash Equilibria of the
\SingletonGameName are symmetric.
We say that \SelRule is \emph{symmetric} if the value of
$\SelRule[i](\SatProbs)$ is invariant under relabeling of the engines,
and write $\SelRule(\SatProbs) = \SelRule[1](\SatProbs)$
for symmetric selection rules, 
as this fully specifies user behavior.
\SelRule is \emph{strictly cross-concave} if it satisfies the following
property: for all $x$ and $y$ such that $x < y$, 
there exists a $\delta > 0$ such that for 
all $0 < \epsilon < \delta$,
\begin{align*}
\SelRule(x+\epsilon,y,\SatProbs_{-\SET{1,2}}) 
- \SelRule(x,y,\SatProbs_{-\SET{1,2}})
& > 
\SelRule(y,x,\SatProbs_{-\SET{1,2}})
- \SelRule(y-\epsilon,x,\SatProbs_{-\SET{1,2}}).
\end{align*}
Intuitively, if \SelRule is cross-concave, 
the increase in user visits that a (trailing) engine experiences when
increasing her satisfaction probability is at least as large as
the loss in visits incurred by a (leading) engine from decreasing her
satisfaction probability by the same amount.
That is, catching up in user visits is easier than expanding a lead.


\begin{theorem}
\label{thm:beta1.sym}
Let \SatMat be a pure strategy equilibrium of the
\SingletonGameName with $\DF = 1$ and symmetric and strictly
cross-concave selection function \SelRule.
Then \SatMat must be symmetric.
\end{theorem}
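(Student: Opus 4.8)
The plan is to argue by contradiction: suppose $\SatMat$ is a pure equilibrium of the \SingletonGameName with $\DF = 1$ that is \emph{not} symmetric. Since $\DF = 1$, the utility of engine $i$ simplifies considerably. Recall that in the \SingletonGameName, engine $i$'s pure strategy is a probability vector $\SatProbV{i} = (\SP{i}{n})_n$ over pages, and its utility is $\sum_n \Td{n} \SelRule[i](\SatProbs(n)) \cdot (\DF + (1-\DF)\SP{i}{n})$; at $\DF = 1$ this becomes simply $\sum_n \Td{n} \SelRule[i](\SatProbs(n))$, so satisfaction only matters through its effect on attracting users via \SelRule. Because the profile is not symmetric, there exist two engines $a, b$ and some page $m$ on which they differ: $\SP{a}{m} \neq \SP{b}{m}$. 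The first step is to locate a page where a ``leading'' engine and a ``trailing'' engine disagree, and set up a local exchange argument.

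\textbf{Deriving the cross-concavity contradiction.}

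First I would fix attention on a single page $n$ and think of each engine's probability mass $\SP{i}{n}$ on that page as a variable it can adjust (compensating on other pages). The key idea is to use the equilibrium first-order-type conditions together with strict cross-concavity. Concretely, suppose engine $a$ places more mass on page $n$ than engine $b$, i.e.\ $\SP{a}{n} > \SP{b}{n}$, with the other engines' probabilities fixed. Consider the symmetric swap in which $a$ shifts a small $\epsilon$ of mass \emph{off} page $n$ while $b$ shifts $\epsilon$ \emph{onto} page $n$ (each engine rebalancing the $\epsilon$ on some common other page to stay a valid distribution). Strict cross-concavity, applied with $x = \SP{b}{n}$ and $y = \SP{a}{n}$, says precisely that the gain to the trailing engine $b$ from increasing its probability exceeds the loss to the leading engine $a$ from decreasing its probability by the same $\epsilon$. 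This means the \emph{sum} of the two engines' utilities on page $n$ strictly increases under the swap, which by a pigeonhole/averaging argument implies that at least one of $a, b$ can strictly improve its own utility by the corresponding unilateral deviation --- contradicting the equilibrium assumption. The main technical care is bookkeeping the ``compensating'' mass on other pages so that the deviation is feasible and its net effect is dominated by the page-$n$ term; one wants to choose the compensation page (or spread it) so that cross-concavity on page $n$ is the dominant effect and the changes elsewhere are lower-order or can be handled symmetrically.

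\textbf{The main obstacle.}

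The hard part, I expect, is converting the statement ``the sum of two engines' utilities strictly increases'' into ``some single engine strictly benefits from a \emph{unilateral} deviation.'' A joint swap that helps the pair on aggregate does not immediately give one engine a profitable unilateral move, because rebalancing mass on other pages changes the satisfaction profile there and feeds back through \SelRule on those pages as well. The clean way around this is to make the deviation purely local: have engine $b$ move $\epsilon$ from its \emph{least valuable} page onto page $n$, and separately check that $a$ moving $\epsilon$ off page $n$ onto its least valuable page is also weakly beneficial or that, by symmetry of \SelRule and the equilibrium conditions, the marginal value of mass across $b$'s pages must be equalized so that the page-$n$ gain dwarfs any loss elsewhere. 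I would likely first establish a marginal-equalization lemma (at equilibrium, for each engine, the marginal utility of probability mass is equal across all pages in its support), and then use strict cross-concavity to show that if $\SP{a}{n} > \SP{b}{n}$ the trailing engine $b$'s marginal value of page $n$ strictly exceeds $a$'s, violating the requirement that both engines' marginals be equalized at a common level --- thereby producing the desired profitable deviation and the contradiction. Pinning down this marginal-equalization step rigorously (including handling boundary pages with zero mass) is where the real work lies.
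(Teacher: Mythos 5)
Your overall strategy --- a local $\epsilon$-exchange pitting strict cross-concavity against the no-deviation conditions of equilibrium --- is the same family of argument the paper uses, but as written there is a genuine gap in how you close it, and your proposed fix does not work. You isolate a \emph{single} page $n$ with $\SP{a}{n} > \SP{b}{n}$ and then hope to finish either by a pigeonhole on the pair's summed utility or by a ``marginal-equalization'' lemma asserting that both engines' marginals are equalized \emph{at a common level}. Equilibrium gives no such common level: each engine $i$ equalizes its own (discrete) marginals at its own level $\lambda_i$, and in a putative asymmetric equilibrium nothing prevents $\lambda_a \neq \lambda_b$, so showing that the trailing engine's marginal on page $n$ exceeds the leader's contradicts nothing by itself. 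Moreover, \SelRule is not assumed differentiable or even continuous, so a derivative-based marginal lemma is unavailable; cross-concavity is deliberately stated in discrete $\epsilon$-increments for exactly this reason.

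The missing idea is to use \emph{two} pages with opposite orderings. Since $\SatProbV{a}$ and $\SatProbV{b}$ are both probability distributions, $\SP{a}{n} > \SP{b}{n}$ for some $n$ forces $\SP{a}{n'} < \SP{b}{n'}$ for some other page $n'$. Now pair the two deviations ``engine $a$ moves $\epsilon$ from $n$ to $n'$'' and ``engine $b$ moves $\epsilon$ from $n'$ to $n$''. Each is a feasible unilateral deviation with no uncontrolled compensation terms, and equilibrium says each is weakly unprofitable. Applying strict cross-concavity once on page $n$ (where $b$ trails) and once on page $n'$ (where $a$ trails), and adding the two resulting strict inequalities, exactly matches --- and strictly exceeds --- the sum of the two no-deviation inequalities: every term cancels and you obtain $0 > 0$. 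This is precisely the paper's proof. Your ``pigeonhole on the sum of the two engines' utilities'' is the right skeleton, but it only becomes airtight once the two deviations are chosen so that the gain and loss terms live on the same two pages and cancel against one another (keeping the weights $\Td{n}$ and $\Td{n'}$ attached throughout).
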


\begin{proof}
The proof proceeds by directly arguing
that if two engines employ different strategies, then cross-concavity implies
that one of them can strictly improve her welfare by shifting her page
distribution slightly toward the strategy used by the other.

Let $\SatMat = (\SatProbV{i})_{i=1}^{\NUMENG}$ be a pure strategy
equilibrium of the \SingletonGameName.
Assume for contradiction that \SatMat is not symmetric.
Then, there are two engines 
(without loss of generality, engines $1$ and $2$) 
and two pages (without loss of generality, pages $1$ and $2$)
such that $\SP{1}{2} < \SP{2}{2}$ and $\SP{1}{1} > \SP{2}{1}$.
Let $\delta_1$ be the constant in the definition of cross-concavity
when applied to engines $1,2$ with $x = \SP{1}{2}$ and $y = \SP{2}{2}$.
Similarly, let $\delta_2$ be the constant when the definition is
applied to engines $1,2$ with $x = \SP{2}{1}$ and $y = \SP{1}{1}$. 
Fix an arbitrary $\epsilon < \min(\delta_1, \delta_2)$.
By strict cross-concavity,
\begin{equation} \label{eqn:cc-impl} \begin{split}
& \SelRule(\SP{1}{2}+\epsilon,\SP{2}{2},\SatProbs_{-\SET{1,2}}) 
- \SelRule(\SP{1}{2},\SP{2}{2},\SatProbs_{-\SET{1,2}}) 
\\ & \; > 
\SelRule(\SP{2}{2},\SP{1}{2},\SatProbs_{-\SET{1,2}})
- \SelRule(\SP{2}{2}-\epsilon,\SP{1}{2},\SatProbs_{-\SET{1,2}}),\\
& \SelRule(\SP{2}{1}+\epsilon,\SP{1}{1},\SatProbs_{-\SET{1,2}}) 
- \SelRule(\SP{2}{1},\SP{1}{1},\SatProbs_{-\SET{1,2}}) 
\\ & \; > 
\SelRule(\SP{1}{1},\SP{2}{1},\SatProbs_{-\SET{1,2}})
- \SelRule(\SP{1}{1}-\epsilon,\SP{2}{1},\SatProbs_{-\SET{1,2}}).
\end{split} \end{equation}

Because \SatMat is an equilibrium, engine 1 cannot profit from moving
$\epsilon$ probability from page 1 to page 2, nor can engine 2 profit
from moving $\epsilon$ from page 2 to page 1. Thus,
\begin{equation} \label{eqn:ne-impl} \begin{split}
& \SelRule(\SP{1}{2}+\epsilon,\SP{2}{2},\SatProbs_{-\SET{1,2}}) 
+ \SelRule(\SP{1}{1}-\epsilon,\SP{2}{1},\SatProbs_{-\SET{1,2}}) \\
& \hspace{0.5cm} \leq
\SelRule(\SP{1}{2},\SP{2}{2},\SatProbs_{-\SET{1,2}}) 
+ \SelRule(\SP{1}{1},\SP{2}{1},\SatProbs_{-\SET{1,2}}),\\
& \SelRule(\SP{2}{1}+\epsilon,\SP{1}{1},\SatProbs_{-\SET{1,2}}) 
+ \SelRule(\SP{2}{2}-\epsilon,\SP{1}{2},\SatProbs_{-\SET{1,2}}) \\
& \hspace{0.5cm} \leq
\SelRule(\SP{2}{1},\SP{1}{1},\SatProbs_{-\SET{1,2}}) 
+ \SelRule(\SP{2}{2},\SP{1}{2},\SatProbs_{-\SET{1,2}}).
\end{split} \end{equation}

Adding the two Equations~\eqref{eqn:cc-impl} and subtracting the two
Equations~\eqref{eqn:ne-impl}, all terms cancel and we are left
with the contradiction $0 > 0$.
Hence, \SatMat must be symmetric.
\end{proof}

Theorem~\ref{thm:beta1.sym} implies that for
symmetric and strictly cross-concave selection rules and $\DF=1$,
more engines do not provide any advantage in pure strategy
equilibria. As we show below, this implies a PoA of $\Omega(\NUMENG)$.
However, we will show that the effects of competition can be even more
negative, by giving lower bounds on the \emph{Price of Stability} for
the proportional selection rule.
Thereto, we next characterize the (unique)
equilibrium for the proportional selection rule.
The proof follows from an analysis of the equilibrium
condition that no engine wishes to shift probability mass from any
page to any other.

\begin{theorem}
\label{thm:beta1.prop}
When $\DF = 1$ and \SelRule is the proportional sharing rule, 
the \SingletonGameName has a unique pure Nash Equilibrium;
in this equilibrium,
each engine selects page $n \in \Omega$
with probability \Td{n}.
\end{theorem}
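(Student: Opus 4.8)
The plan is to instantiate engine $i$'s payoff under the two hypotheses and then read off the ``no-shift'' first-order conditions, exactly as the theorem statement suggests. With $\DF=1$ a visitor contributes the same whether or not satisfied, and proportional selection makes the contribution of page $n$ equal to engine $i$'s share, so engine $i$ maximizes $u_i=\sum_{n\in\PageSet}\Td{n}\,\frac{\SP{i}{n}}{\sum_j\SP{j}{n}}$ over distributions $\SatProbV{i}$, with the rivals' strategies fixed. Writing $T_{-i,n}=\sum_{j\neq i}\SP{j}{n}$ and $T_n=\SP{i}{n}+T_{-i,n}$, each summand $\SP{i}{n}/(\SP{i}{n}+T_{-i,n})$ is concave in $\SP{i}{n}$ whenever $T_{-i,n}>0$, so $u_i$ is concave and a best response is captured by KKT conditions with one multiplier $\lambda_i$: where $\SP{i}{n}>0$ the marginal value $\Td{n}\,T_{-i,n}/T_n^2$ equals $\lambda_i$, and where $\SP{i}{n}=0$ it is at most $\lambda_i$.

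Existence is immediate: if every engine plays $\SP{j}{n}=\Td{n}$ then $T_n=\NUMENG\Td{n}$ and the marginal value of each page equals the constant $(\NUMENG-1)/\NUMENG^2$; since the marginals are equalized across pages, concavity certifies $\SatProbV{i}=(\Td{n})_n$ as a best response, so the symmetric profile is an equilibrium.

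For uniqueness I would solve the KKT system. Setting $r_n=\Td{n}/T_n$, stationarity rearranges to $\SP{i}{n}=T_n\,(1-\lambda_i/r_n)^+$, i.e.\ engine $i$ plays page $n$ precisely when $\lambda_i<r_n$. Two preliminary facts fix the regime. First, $T_n>0$ for every page: were $T_n=0$, an engine shifting an infinitesimal mass onto $n$ would jump from share $1/\NUMENG$ to being the sole provider, a discrete gain against an $O(\epsilon)$ loss, contradicting equilibrium. Second, each $\lambda_i>0$ and every page an engine plays is shared by another engine: if $\lambda_i=0$, engine $i$ would be the sole provider on each page it plays, yet any page a rival plays would offer $i$ a strictly positive marginal and hence a profitable shift, so no rival could play anything --- impossible, as each rival's probabilities sum to $1$.

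The crux --- and the step I expect to be the main obstacle --- is showing all multipliers coincide. Suppose $\lambda_i<\lambda_{i'}$. On every page that $i'$ plays we have $\lambda_{i'}<r_n$, hence $\lambda_i<r_n$ too, so $i$ plays it as well and, from the formula, strictly more; on pages $i'$ avoids, trivially $\SP{i}{n}\geq 0=\SP{i'}{n}$. Thus $\SP{i}{n}\geq\SP{i'}{n}$ for all $n$, strictly on the nonempty set of pages $i'$ plays, contradicting $\sum_n\SP{i}{n}=\sum_n\SP{i'}{n}=1$. With all $\lambda_i=\lambda$, each page constraint $\sum_i(1-\lambda/r_n)^+=1$ forces $\lambda<r_n$ and $\NUMENG(1-\lambda/r_n)=1$, so $r_n$ is a single constant; combined with $\sum_nT_n=\NUMENG$ this yields $T_n=\NUMENG\Td{n}$ and finally $\SP{i}{n}=T_n/\NUMENG=\Td{n}$. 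The genuinely delicate part throughout is the boundary behavior --- pages with no provider or with a single provider --- where the discontinuity of proportional selection at $\mathbf{0}$ and the sole-provider plateau must be handled by the two preliminary facts above rather than by smooth first-order reasoning.
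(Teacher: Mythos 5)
Your proof is correct, and the uniqueness half takes a genuinely different route from the paper's. The paper first invokes its cross-concavity result (Theorem~\ref{thm:beta1.sym}) to conclude that \emph{every} pure equilibrium must be symmetric, and only then solves the Lagrangian conditions within the class of symmetric profiles; you instead write down KKT conditions for an arbitrary (possibly asymmetric) equilibrium and derive symmetry as a consequence, via the observation that $\SP{i}{n}=T_n(1-\lambda_i/r_n)^+$ is pointwise monotone in $\lambda_i$, so two distinct multipliers would give two distributions one of which dominates the other coordinatewise --- incompatible with both summing to $1$. This buys a self-contained argument that exploits the algebraic form of the proportional rule, whereas the paper's route is more modular (its symmetry lemma applies to any symmetric, strictly cross-concave selection rule, not just the proportional one). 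Your existence step (equalized marginals plus separable concavity of $\sum_n \Td{n}\,\SP{i}{n}/\sum_j \SP{j}{n}$ certify a global best response) is also cleaner than the paper's, which rules out deviating best responses by deriving a contradiction from the relaxed Lagrangian inequalities at a hypothetical asymmetric best response. You correctly identify and dispose of the one real danger, namely the discontinuity/plateau of the proportional rule at pages with zero or one provider: your two preliminary facts (every page is played, and no engine is the sole provider of a page in its support, since a zero marginal loss could be traded against a strictly positive marginal gain on any rival-played page) are exactly what is needed to make the smooth KKT analysis legitimate; the paper's proof establishes the analogous $\SPr{n}>0$ fact by the same small-$\epsilon$ deviation. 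The only presentational quibble is that your second preliminary fact is phrased through the multiplier $\lambda_i$ before the differentiability that justifies a single multiplier has been secured; stating it directly as a pairwise-shift argument (as your parenthetical reasoning in fact does) removes the apparent circularity.
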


\begin{proof}
The proportional sharing rule satisfies symmetry and
strict cross-concavity.
Thus, Theorem~\ref{thm:beta1.sym} guarantees that all pure equilibria
are symmetric. It remains to show that a symmetric equilibrium \emph{exists} and
is \emph{unique}.

We begin by showing uniqueness and deriving a specific form for the
equilibrium. Let \SatMat be a symmetric pure equilibrium; it is
completely described by the (common) probability \SPr{n} with which
the engines choose page $n$.
We claim that $\SPr{n} > 0$ for all pages $n$.
Assume for contradiction that $\SPr{n} = 0$ for some page $n$, and let
$n'$ be any page with $\SPr{n'} > 0$.
Consider the deviation in which engine 1 adds a small amount
$\epsilon$ of probability to page $n$, and removes it from page $n'$.
Engine 1 receives all visits from users of type $n$, for an
added payoff of $\Td{n} > 0$ (since each page is
desired with positive probability).
The loss in payoff from page $n'$ is less than
$\Td{n'} \cdot \frac{\epsilon}{\NUMENG \SPr{n'}}$
For small enough $\epsilon$, this is strictly less than the gain
\Td{n}. Thus, \SatMat was not an equilibrium.

Because $\SPr{n} > 0$ for all $n$, the strategy profiles under
consideration are bounded away in all coordinates from $0$ (and 
also $1$), and thus from the discontinuities (and
non-differentiabilities) of the selection function. 
We can therefore safely take derivatives.

Engine $i$ is choosing her strategy to maximize
$\sum_n \Td{n} \SelRule(\SatProbPage{n})$ subject to 
$\sum_n \SP{i}{n} = 1$; thus, the derivative of the Lagrangian
$L(\SatProbV{i},\lambda) = 
\sum_n \Td{n} \SelRule(\SatProbPage{n})) + \lambda (1 - \sum_n \SP{i}{n})$
with respect to all variables must be 0:
\begin{align*}
\Td{n} \frac{d}{d \SP{i}{n}} \SelRule(\SP{i}{n}, \SatProbV{-i}(n)) 
& = \lambda
\end{align*}
for each page $n$.  
Using the definition of the proportional selection rule, this
implies
\begin{align}
\frac{\sum_{i' \neq i} \SP{i'}{n}}{(\sum_{i'} \SP{i'}{n})^2} 
& = \frac{\lambda}{\Td{n}}
\label{eqn:proportional-eq}
\end{align}
for each $n$.
As the equilibrium is symmetric, $\SP{i'}{n} = \SP{i}{n}$ for all
$i'$, simplifying Equation~\eqref{eqn:proportional-eq} to
$\SP{i}{n} = \frac{\Td{n} \cdot (\NUMENG-1)}{\lambda \NUMENG^2}.$
Since \SatProbV{i} and \TypeDist are distributions, we must have
$\lambda \NUMENG^2/(\NUMENG-1) = 1$, 
and hence $\SP{i}{n} = \Td{n}$ for all engines
$i$ is the only equilibrium.

It remains to prove that this distribution is in fact an equilibrium.
Consider any best response \SatProbV{i} by engine $i$ to the
symmetric equilibrium $\SP{i'}{n} = \Td{n}$ played by the other
engines. 
We use variants of the Lagrangian condition
\eqref{eqn:proportional-eq}, substituting that $\SP{i'}{n} = \Td{n}$.
Assume that $\SatProbV{i} \neq \SatProbV{i'}$; thus, there must exist
pages $n,n'$ with $0 \leq \SP{i}{n} < \SP{i'}{n}$
and $\SP{i'}{n'} < \SP{i}{n'} \leq 1$.
Since it is possible that $\SP{i}{n} = 0$ or $\SP{i}{n'} = 1$, we only
obtain weaker Lagrangian conditions of non-positive (resp.,
non-negative) derivative, as opposed to derivatives of 0.
These weaker Lagrangian conditions for $n,n'$ yield
\begin{align*}
\frac{(\NUMENG-1) \Td{n}}{(\SP{i}{n} + (\NUMENG-1) \Td{n})^2} 
& \leq \frac{\lambda}{\Td{n}},
& \frac{(\NUMENG-1) \Td{n'}}{(\SP{i}{n'} + (\NUMENG-1) \Td{n'})^2} 
& \geq \frac{\lambda}{\Td{n'}}.
\end{align*}
We can rearrange for $\lambda$ and use that 
$\SP{i}{n} < \Td{n}$ and $\Td{n'} < \SP{i}{n'}$ to obtain that
\begin{align*}
& \frac{(\NUMENG-1) \Td{n}^2}{(\NUMENG \Td{n})^2}
\; < \; \frac{(\NUMENG-1) \Td{n}^2}{(\SP{i}{n} + (\NUMENG-1) \Td{n})^2} 
\; \leq \; \lambda
\; \leq \; \frac{(\NUMENG-1) \Td{n'}^2}{(\SP{i}{n} + (\NUMENG-1) \Td{n'})^2} 
\; < \; \frac{(\NUMENG-1) \Td{n'}^2}{(\NUMENG \Td{n'})^2},
\end{align*}
a clear contradiction after canceling out the \Td{n} terms on the left
and the \Td{n'} terms on the right. 
Thus, the only locally optimal response for player $i$ is to play the
same distribution as the other players, so the symmetric distribution
is an equilibrium.
\end{proof}

In summary, when users apply the proportional selection rule, it is a
(unique) symmetric equilibrium for each engine to choose its page to display
according to the type distribution of the users.

\subsection{Lower Bounds on the Price of Stability}

We use Theorem \ref{thm:beta1.prop} to bound the
PoS for the \SingletonGameName (and thus also the
\SearchGameName).

\begin{proposition}
\label{prop:poa.n}
Consider the \SingletonGameName with $\DF=1$ and the proportional
selection rule.
\begin{enumerate}
\vspace{-1mm}
\item There are instances in which the PoS is at least
  \NUMENG.
\vspace{-2mm}
\item There are instances with only two engines in which the PoS 
is $\Omega(\sqrt{\NUMPAGES})$.
\end{enumerate}
\end{proposition}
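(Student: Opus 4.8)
The plan is to leverage the exact equilibrium characterization from Theorem~\ref{thm:beta1.prop}. Because that theorem shows the \SingletonGameName with $\DF=1$ and the proportional rule has a \emph{unique} pure Nash equilibrium \SatMat, in which $\SP{i}{n}=\Td{n}$ for every engine $i$ and page $n$, the Price of Stability is the single ratio $\Welfare{\OPT}/\Welfare{\SatMat}$. So first I would compute the equilibrium welfare in closed form. Substituting the proportional rule $\SelRule[i](\SatProbPage{n})=\SP{i}{n}/\sum_j\SP{j}{n}$ into the welfare expression and using symmetry $\SP{i}{n}=\Td{n}$ yields
\[
\Welfare{\SatMat}
= \sum_n \Td{n} \sum_i \SelRule[i](\SatProbPage{n}) \cdot \SP{i}{n}
= \sum_n \Td{n}\cdot\frac{\sum_i (\SP{i}{n})^2}{\sum_j \SP{j}{n}}
= \sum_n \Td{n}^2 .
\]
This clean identity is the workhorse for both parts.

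Next I would lower-bound the optimum. A social planner can let engine $i$ display a distinct page deterministically; under proportional selection, any user desiring one of these \NUMENG displayed pages then picks the corresponding engine with probability $1$ and is satisfied. Hence \Welfare{\OPT} is at least the sum of the \NUMENG largest values $\Td{n}$ (and at most $1$, being a probability). For Part~1 I would take $\NUMPAGES=\NUMENG$ uniform pages, $\Td{n}=1/\NUMENG$: then $\Welfare{\OPT}=1$ (all users satisfied) while $\Welfare{\SatMat}=\NUMENG\cdot(1/\NUMENG)^2=1/\NUMENG$, so the PoS is exactly $\NUMENG$.

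For Part~2, with $\NUMENG=2$, I would instead use a single ``heavy'' page of weight $\Td{1}=1/\sqrt{\NUMPAGES}$ and spread the remaining mass uniformly over the other $\NUMPAGES-1$ pages. Placing one engine on the heavy page gives $\Welfare{\OPT}\ge\Td{1}=1/\sqrt{\NUMPAGES}$, whereas the identity gives $\Welfare{\SatMat}=\Td{1}^2+(\NUMPAGES-1)\big(\tfrac{1-1/\sqrt{\NUMPAGES}}{\NUMPAGES-1}\big)^2\le \tfrac1{\NUMPAGES}+\tfrac1{\NUMPAGES-1}=O(1/\NUMPAGES)$, whence the ratio is $\Omega(\sqrt{\NUMPAGES})$. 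The one genuine design choice --- and the step I expect to require the most thought --- is calibrating the heavy weight: the PoS numerator scales like $\Td{1}$ while the denominator scales like $\Td{1}^2+\Theta(1/\NUMPAGES)$, and maximizing $\Td{1}/(\Td{1}^2+1/\NUMPAGES)$ pins the optimal weight at $\Td{1}=\Theta(1/\sqrt{\NUMPAGES})$, precisely the value that produces the $\sqrt{\NUMPAGES}$ separation. Everything else reduces to routine substitution into the welfare identity above.
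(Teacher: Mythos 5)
Your proposal is correct and follows essentially the same route as the paper: both invoke Theorem~\ref{thm:beta1.prop} to reduce the PoS to the ratio $\sum_{n\le \NUMENG}\Td{n}\big/\sum_n \Td{n}^2$ and then instantiate a near-uniform distribution for Part~1 and a heavy-plus-uniform distribution for Part~2. The only cosmetic differences are that the paper uses two heavy pages of weight $\Theta(1/\sqrt{\NUMPAGES})$ rather than one and perturbs its distributions into general position, neither of which changes the argument.
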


\begin{proof}
Suppose that we have $\NUMPAGES \geq \NUMENG$ pages, and assume
w.l.o.g.~that $\Td{1} > \Td{2} > \cdots > \Td{\NUMPAGES}$, and the
probabilities are in general position.
The social optimum occurs when engine $i$ deterministically displays
page $i$, leading to a social welfare of 
$\sum_{n=1}^{\NUMENG} \Td{n}$.
By Theorem \ref{thm:beta1.prop}, the unique pure equilibrium has 
each engine choosing each page $n$ with probability \Td{n}.
Each user will then be satisfied with probability 
$\sum_{n=1}^{\NUMPAGES} \Td{n}^2$, regardless of his engine choice.
The PoS is thus 
$\frac{\sum_{n=1}^{\NUMENG} \Td{n}}{\sum_{n=1}^{\NUMPAGES} \Td{n}^2}$.

We now consider two special cases.
First, when the first \NUMENG pages have probability $\Theta(1/\NUMENG)$
each, and the remaining pages together have probability $o(1/\NUMENG)$, the
PoS is $\Omega(\NUMENG)$, as can be seen by
upper-bounding $\sum_{n=1}^{\NUMPAGES} \Td{n}^2 
\leq \sum_{n=1}^{\NUMENG} \Td{n}^2 + (\sum_{n > \NUMENG} \Td{n})^2$.
Another distribution giving rise to this type of behavior is when all
pages have probability $\Theta(1/\NUMPAGES)$.

Second, consider the following distribution:
the top two pages have probability $\Theta(1/\sqrt{\NUMPAGES})$;
the remaining pages have probability $\Theta(1/\NUMPAGES)$.
The PoS becomes 
$$\Theta(1/\sqrt{\NUMPAGES})/\Theta(1/\NUMPAGES) = \Theta(\sqrt{\NUMPAGES}).$$
Thus, when there are a few highly relevant pages and many rare
pages, the symmetric equilibrium leads to high
inefficiency $\Omega(\sqrt{\NUMPAGES})$.
\end{proof}

\section{When Satisfaction is Partially Beneficial: $\DF \in (0,1)$}
\label{sec:intermediate-beta}

The previous two sections describe equilibria of two very different
forms: an equilibrium with deterministic and potentially non-symmetric
page selections for the case $\DF = 0$, and a symmetric equilibrium
that follows the user type distribution for the case $\DF = 1$ under
the proportional selection rule.
We now consider the intermediate case, $\DF \in (0,1)$, which
represents scenarios in which search engines derive some, but
not full, expected advertising benefit from a user who is not served a
desired page.

We show that, for any fixed $\DF$, our lower bounds on the PoS
in Proposition~\ref{prop:poa.n} can persist
as long as the number of pages \NUMPAGES is large.
(However, for intermediate \DF values, we cannot rule out the
existence of other equilibria, so we obtain lower
bounds on the PoA, not the PoS.)
Our conclusion is that the constant PoA bounds at 
$\DF = 0$ in general do not survive perturbations of $\DF$
when the selection rule is not convex.

\ignore{
When $\DF > 0$, but $\NUMPAGES$ is not sufficiently large (or,
equivalently, $\DF$ is too small for a given fixed $\NUMPAGES$), it
may be that an equilibrium does not exist at all.  In this case, we
can use the continuity of the utility functions as functions of \DF to
argue that a deterministic strategy profile forms a \DF-approximate
equilibrium. 
}

\subsection{Existence and Uniqueness of Symmetric Equilibria}
\label{sec:symmetric-intermediate-beta}

Since our goal in this section --- much like in
Section~\ref{sec:beta1} --- is to establish strong lower bounds on the
Price of Anarchy,
we again focus on the \SingletonGameName with the
proportional selection rule \SelRule.
We first show that the existence of symmetric 
equilibria for this class, established for $\DF = 1$ 
in Section~\ref{sec:beta1}, continues to hold under small
perturbations of $\DF$.
The proof is conceptually similar to that of
Theorem~\ref{thm:beta1.prop}.

\begin{theorem}
\label{thm:intermediate.sym}
Suppose that $\DF \in (0,1)$.
The \SingletonGameName with proportional selection rule \SelRule
has at most one symmetric equilibrium.
Moreover, whenever $\DF > 1 - 1/\NUMENG$, a symmetric equilibrium exists.
\end{theorem}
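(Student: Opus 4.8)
The plan is to characterize symmetric equilibria through the first-order (Lagrangian) stationarity conditions of each engine's best-response problem, exactly as in the proof of Theorem~\ref{thm:beta1.prop}, and then to invoke a concavity argument to certify that the resulting critical point is a genuine best response precisely when $\DF > 1 - 1/\NUMENG$. Throughout, write $\SPr{n}$ for the common probability that each engine displays page $n$ in a symmetric profile, and set $S_n := (\NUMENG-1)\SPr{n}$, the total satisfaction probability contributed by the other $\NUMENG-1$ engines on page $n$.

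First I would show that in any symmetric equilibrium every page is displayed with positive probability, i.e.\ $\SPr{n} > 0$ for all $n$. If $\SPr{n} = 0$, then the satisfaction vector on page $n$ is $\mathbf{0}$, each engine gets the tie-breaking share $1/\NUMENG$, and the per-type-$n$ payoff is $\Td{n}\DF/\NUMENG$; an engine that shifts a small $\epsilon$ onto page $n$ becomes the unique engine displaying it, so $\SelRule[i] = 1$ and the per-type-$n$ payoff jumps to essentially $\Td{n}\DF$. The gain $\Td{n}\DF(1 - 1/\NUMENG) > 0$ is bounded away from zero while the loss on the page from which $\epsilon$ is taken is only $O(\epsilon)$, contradicting equilibrium; this is exactly where $\DF > 0$ is essential. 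Restricting then to the interior of the simplex (where the proportional rule is smooth), I would differentiate engine $i$'s objective $\sum_n \Td{n}\, g_n(\SP{i}{n})$ with $g_n(x) = \frac{x(\DF + (1-\DF)x)}{x + S_n}$, giving the stationarity condition $\Td{n}\, g_n'(\SPr{n}) = \lambda$. A short computation of $g_n'$ at the symmetric point $\SP{i}{n} = \SPr{n}$ turns this into
\begin{align*}
\frac{\DF(\NUMENG-1)}{\NUMENG^2}\cdot\frac{\Td{n}}{\SPr{n}} + \frac{(1-\DF)(2\NUMENG-1)}{\NUMENG^2}\,\Td{n} = \lambda ,
\end{align*}
which solves uniquely (given $\lambda$) for $\SPr{n}$. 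Imposing $\sum_n \SPr{n} = 1$ then pins $\lambda$ down: the map $\lambda \mapsto \sum_n \SPr{n}(\lambda)$ is continuous and strictly decreasing, ranging from $+\infty$ down to $0$, so it equals $1$ for exactly one $\lambda$. This yields a single candidate distribution and hence the ``at most one'' claim.

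For existence when $\DF > 1 - 1/\NUMENG$, the monotonicity argument already produces the candidate $\SPr{\cdot}$; what remains is to verify it is a \emph{global} best response rather than a mere critical point. I would compute $g_n''$ and find that its sign is governed by $(1-\DF)S_n - \DF$, so $g_n$ is concave on $[0,1]$ exactly when $S_n \le \frac{\DF}{1-\DF}$. Since $\SPr{n} \le 1$ we have $S_n = (\NUMENG-1)\SPr{n} \le \NUMENG - 1$, and the hypothesis $\DF > 1 - 1/\NUMENG$ is equivalent to $\frac{\DF}{1-\DF} > \NUMENG - 1$; hence $g_n$ is strictly concave for every $n$, the objective is strictly concave over the simplex, and the interior stationary point is its unique global maximizer. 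Thus the candidate is a genuine symmetric equilibrium.

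The main obstacle is precisely this global-optimality step: the first-order conditions only locate a critical point, and away from the regime $\DF > 1-1/\NUMENG$ the function $g_n$ can be convex in the $n$-th coordinate (when $S_n > \frac{\DF}{1-\DF}$), so the symmetric candidate may be a saddle or a boundary optimum rather than a maximizer. This is exactly why existence is asserted only above the threshold, whereas uniqueness---driven solely by the necessity of the stationarity conditions together with strict monotonicity of the normalization---holds for all $\DF \in (0,1)$.
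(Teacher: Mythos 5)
Your proposal is correct and follows essentially the same route as the paper's proof: establish $\SPr{n}>0$ via the discontinuity of the proportional rule at $\vc{0}$, derive the symmetric stationarity condition $\Td{n}\bigl(\frac{\DF(\NUMENG-1)}{\NUMENG^2 \SPr{n}} + \frac{(1-\DF)(2\NUMENG-1)}{\NUMENG^2}\bigr)=\lambda$, pin down $\lambda$ by strict monotonicity of the normalization, and certify global optimality from the sign of the second derivative, which is governed by $\DF-(1-\DF)(\NUMENG-1)\SPr{n}$ and hence negative exactly under $\DF>1-1/\NUMENG$. Your computations of $g_n'$ and $g_n''$ agree with the paper's Lagrangian derivative and Equation~\eqref{eqn:hessian-diagonal}, so no changes are needed.
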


\begin{proof}
As in the proof of Theorem~\ref{thm:beta1.prop},
we can identify strategy profiles with 
satisfaction probabilities \SatMat, where $\SP{i}{n}$ is the
probability that engine $i$ selects page $n \in \Omega$. 
Suppose that \SatMat is a symmetric equilibrium, say with
$\SP{i}{n} = \SPr{n}$ for all $i$ and $n$.  Then,
as in the proof of Theorem~\ref{thm:beta1.prop},
$\SPr{n} > 0$ for all $n$;
otherwise, an engine could increase her utility by selecting page $n$
with sufficiently small probability.
We can therefore assume that $\SPr{n} > 0$ for all $n$. 

Because each \SatProbV{i} is a best response to the other engines'
strategies, it maximizes
\begin{align*}
\Payoff{i}{\SatProbV{i},\SatProbV{-i}}
& = \sum_n \Td{n} \SelRule(\SatProbPage{n}) \cdot (\DF + (1-\DF)\SP{i}{n}) \\
& = \; \sum_n \Td{n} \frac{\SP{i}{n}}{\sum_k \SP{k}{n}} 
\cdot (\DF + (1-\DF)\SP{i}{n}),
\end{align*}
subject to $\sum_n \SP{i}{n} = 1$.
The Lagrangian for this constrained optimization is
\begin{align*}
L_i (\lambda_i, \SatProbV{i})
& = 
\sum_n \Td{n} \frac{\SP{i}{n}}{\SP{i}{n} + (\NUMENG-1) \SPr{n}}
      \cdot (\DF + (1-\DF)\SP{i}{n}) 
+ \lambda_i(1 - \sum_n \SP{i}{n}).
\end{align*}
Because $\SP{i}{n} > 0$ for all $n$, we can take derivatives;
a necessary condition for \SatProbV{i} to be a local optimum for $L_i$
is then that $\frac{d}{d \SP{i}{n}} L_i(\lambda_i, \SatProbV{i}) = 0$ 
for each $n$. 
This yields
\begin{align*}
\Td{n} \frac{(\NUMENG-1) \SPr{n} \cdot (2(1-\DF)\SP{i}{n} + \DF) +
  (1-\DF)\SP{i}{n}^2}{(\SP{i}{n} + (\NUMENG-1) \SPr{n})^2} = \lambda_i.
\end{align*}
Because \SatMat is symmetric, we also have that $\SP{i}{n} = \SPr{n}$
for all $n$, and can simplify this condition to
\begin{align*}
\Td{n} \frac{(\NUMENG-1) \DF + (2\NUMENG-1)(1-\DF) \SPr{n}}{\NUMENG^2 \SPr{n}} 
& = \lambda_i.
\end{align*}
Writing 
$\lambda'_i = \frac{\NUMENG^2 \lambda_i}{(\NUMENG-1) \DF}$,
$z = \frac{(2\NUMENG-1)(1-\DF)}{(\NUMENG-1) \DF}$,
and rearranging terms, we can solve for \SPr{n} to get
\begin{align}
\SPr{n} 
& = \frac{\Td{n}}{\lambda'_i - z \Td{n}}.
\label{eqn:beta-symmetric-response}
\end{align}
Because $\SPr{n} > 0$, this implies
$\lambda'_i > z \Td{n}$ for all $n$.
The condition $\sum_n \SPr{n} = 1$ implies
\begin{align}
\sum_n \frac{\Td{n}}{\lambda'_i - z \Td{n}} = 1. 
\label{eqn:lambda-condition}
\end{align}
Now let $n^* \in \argmax_n \SET{\Td{n}}$ be a page maximizing the
probability of being desired.
For $\lambda$ in the range $(z \Td{n^*}, \infty)$, we have
$\frac{\Td{n}}{\lambda - z \Td{n}} > 0$ for all pages $n$,
and the expression $\sum_n \frac{\Td{n}}{\lambda - z \Td{n}}$ is
strictly decreasing, approaching $0$ as $\lambda \to \infty$ and
approaching $\infty$ as $\lambda \to z \Td{n^*}$.
There is therefore a unique solution (in $\lambda'_i$) to the
Equation~\eqref{eqn:lambda-condition} in the range $(z \Td{n^*}, \infty)$.
By choice of $n^*$, all terms of 
$\sum_n \frac{\Td{n}}{\lambda'_i - z \Td{n}}$ are positive at this
solution, and each $\frac{\Td{n}}{\lambda'_i - z \Td{n}}$ lies in $(0,1)$.
Thus, the $\frac{\Td{n}}{\lambda'_i - z \Td{n}}$
form a valid distribution over pages, and constitute the only local
maximum of the Laplacian.
Thus, the values $\SPr{n} = \frac{\Td{n}}{\lambda'_i - z \Td{n}}$
constitute the only candidate symmetric equilibrium.\footnote{We
remark that for $\beta=1$, we get that $z=0$, so we recover the
analysis from the proof of Theorem~\ref{thm:beta1.prop} as a special
case.}

It remains to show that this symmetric profile, which attains a local
maximum for the utility of each engine, is actually a global maximum
(and thus a best response) for each player when $\DF > 1 - 1/\NUMENG$.
To see this, note that
$\frac{d^2 \Payoff{i}{\SatProb{i},\SatProbV{-i}}}{d\SP{i}{n} d\SP{i}{n'}} = 0$
for all pairs of pages $n \neq n'$;
furthermore, for each page $n$,
\begin{align}
\frac{d^2 \Payoff{i}{\SatProb{i},\SatProbV{-i}}}{d^2 \SP{i}{n}} 
& = 
- \frac{2 \Td{n} (\sum_{k \neq i} \SP{k}{n}) \cdot
        (\DF - (1 - \DF)\sum_{k \neq i}\SP{k}{n})}{%
        (\sum_k \SP{k}{n})^3}. 
\label{eqn:hessian-diagonal}
\end{align}
Since $\sum_{k \neq i}\SP{k}{n} \leq \NUMENG-1$, our assumption 
that $\DF > 1 - \frac{1}{\NUMENG}$ implies that
$\DF - (1 - \DF) \sum_{k \neq i}\SP{k}{n}) > 0$, 
and hence 
$\frac{d^2 \Payoff{i}{\SatProb{i},\SatProbV{-i}}}{d^2 \SP{i}{n}} < 0$
over the entire domain of \Payoff{i}{\cdot}.  
This implies that the Hessian of \Payoff{i}{\cdot} is negative
definite, and \Payoff{i}{\cdot} is strictly concave.
Thus, the local maximum described above is also a global maximum.
We conclude that the symmetric profile is an equilibrium.
\end{proof}

\subsection{Inefficiency of Equilibria}
Having shown the existence of symmetric equilibria for sufficiently
large \DF, we now show that for specific instances, symmetric
equilibria (and thus high PoA) arise for much smaller
values of \DF already.
Specifically, we show that the bad examples from
Proposition~\ref{prop:poa.n} persist for 
$\DF = \Omega(\NUMENG/\NUMPAGES)$ 
and $\DF = \Omega(\NUMENG/\sqrt{\NUMPAGES})$, respectively.
Notice that since $\NUMPAGES \gg \NUMENG$, this means that in
realistic settings, we cannot rule out high inefficiency due to search
engine competition for any constant value of $\DF$.

We emphasize that while  
Proposition~\ref{prop:poa.n} provided lower bounds on the \emph{Price
  of Stability}, we have not ruled out the existence of asymmetric
equilibria for $\DF \in (0,1)$; therefore, the propositions in this
section only prove lower bounds on the \emph{Price of Anarchy}.

\begin{proposition}
\label{prop:intermediate-engines}
Consider the \SingletonGameName with proportional selection rule
\SelRule. 
If $\DF > \frac{2\NUMENG}{\NUMPAGES + 2\NUMENG}$, there are
instances with PoA $\Omega(\NUMENG)$.
\end{proposition}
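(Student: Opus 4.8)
The plan is to exhibit the bad instance as the \emph{uniform} type distribution $\Td{n} = 1/\NUMPAGES$ for all $n \in \PageSet$ (with $\NUMPAGES \geq \NUMENG$), and to show that the symmetric profile in which every engine also plays uniformly, $\SPr{n} = 1/\NUMPAGES$, is an equilibrium whenever $\DF > \frac{2\NUMENG}{\NUMPAGES+2\NUMENG}$. Granting this, the price of anarchy follows from a direct welfare comparison: in the socially optimal profile engine $i$ deterministically displays a distinct page $i$, so every user of type $n \leq \NUMENG$ is routed (by the proportional rule) to the unique engine satisfying her, giving $\Welfare{\OPT} = \NUMENG/\NUMPAGES$; in the uniform equilibrium, a user of type $n$ picks each engine with probability $1/\NUMENG$ and is then satisfied with probability $\SPr{n} = 1/\NUMPAGES$, so the welfare is $\sum_n \Td{n}\SPr{n} = 1/\NUMPAGES$. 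Hence the PoA is at least $(\NUMENG/\NUMPAGES)/(1/\NUMPAGES) = \NUMENG = \Omega(\NUMENG)$.

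It remains to establish that the uniform profile is an equilibrium under the stated condition on \DF. First, since \TypeDist is uniform, the unique candidate symmetric equilibrium supplied by Theorem~\ref{thm:intermediate.sym} (the distribution $\SPr{n} = \Td{n}/(\lambda'_i - z\Td{n})$) is constant in $n$, hence equal to $\SPr{n} = 1/\NUMPAGES$, and it satisfies the first-order Lagrangian condition. Second, I would invoke the Hessian computation from the proof of Theorem~\ref{thm:intermediate.sym}: holding the opponents fixed at the uniform strategy, $\Payoff{i}{\cdot}$ has a diagonal Hessian whose $n$-th entry is given by Equation~\eqref{eqn:hessian-diagonal} with $\sum_{k \neq i}\SP{k}{n} = (\NUMENG-1)/\NUMPAGES$. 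This entry is negative for every $n$ and every value of $\SP{i}{n}$ precisely when $\DF - (1-\DF)\frac{\NUMENG-1}{\NUMPAGES} > 0$, i.e.\ when $\DF > \frac{\NUMENG-1}{\NUMPAGES+\NUMENG-1}$. Since $\frac{2\NUMENG}{\NUMPAGES+2\NUMENG} > \frac{\NUMENG-1}{\NUMPAGES+\NUMENG-1}$ (cross-multiplying leaves the positive quantity $\NUMPAGES(\NUMENG+1)$), the hypothesis $\DF > \frac{2\NUMENG}{\NUMPAGES+2\NUMENG}$ guarantees this inequality, so $\Payoff{i}{\cdot}$ is strictly concave on the simplex. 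A strictly concave function has a unique stationary point on the simplex, which is its global maximizer; as the uniform profile is that stationary point, it is a best response for engine $i$, and the symmetric uniform profile is an equilibrium.

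The main obstacle — and the only place the precise threshold on \DF enters — is this existence argument: for the constant-PoA regime of Section~\ref{sec:convexity} to break one needs a \emph{genuine} equilibrium with poor welfare, yet Theorem~\ref{thm:intermediate.sym} only guarantees existence for the much larger range $\DF > 1 - 1/\NUMENG$. The observation that rescues the argument is that, against opponents fixed at the symmetric profile, the relevant opponent mass $\sum_{k \neq i}\SP{k}{n}$ equals the small quantity $(\NUMENG-1)/\NUMPAGES$ rather than the worst-case $\NUMENG-1$, which relaxes the concavity (hence existence) threshold from $1-1/\NUMENG$ down to $O(\NUMENG/\NUMPAGES)$. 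By comparison, the welfare computation and the optimality of the distinct-page profile for \OPT are routine: the satisfaction of type $n$ never exceeds $\max_i \SP{i}{n}$, and $\sum_n \max_i \SP{i}{n} \leq \sum_i\sum_n \SP{i}{n} = \NUMENG$, so no profile can exceed welfare $\NUMENG/\NUMPAGES$ on the uniform instance.
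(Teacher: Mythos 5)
Your proof is correct and follows essentially the same route as the paper's: exhibit a (nearly) uniform instance, verify via the Lagrangian stationarity condition and the Hessian formula in Equation~\eqref{eqn:hessian-diagonal} that the symmetric near-uniform profile is an equilibrium --- the key point in both arguments being that the opponents' per-page mass is only $(\NUMENG-1)/\NUMPAGES$ rather than $\NUMENG-1$, which relaxes the concavity threshold to $O(\NUMENG/\NUMPAGES)$ --- and then compare welfare $\Theta(1/\NUMPAGES)$ against $\OPT = \NUMENG/\NUMPAGES$. The only cosmetic difference is that you use the exactly uniform distribution (making the computations exact), whereas the paper perturbs to a nearly uniform distribution in general position; general position is not needed for this lower bound, so your choice is fine.
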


\begin{proof}
Let \TypeDist be a nearly uniform distribution in general position;
specifically, for each page $n$, assume
$1/\NUMPAGES - 1/\NUMPAGES^2 \leq \Td{n} \leq 1/\NUMPAGES + 1/\NUMPAGES^2$.
We will reuse the calculations from the proof of
Theorem~\ref{thm:intermediate.sym} to first find the symmetric
strategies that form a local maximum of the Lagrangians $L_i$,
then to verify that they constitute an equilibrium.

We begin by lower-bounding the value of $\lambda'_i$
from Equation~\eqref{eqn:lambda-condition}.
By the choice of \Td{n}, we have
\begin{align*}
\NUMPAGES \frac{1/\NUMPAGES-1/\NUMPAGES^2}{%
\lambda'_i - z(1/\NUMPAGES-1/\NUMPAGES^2)}
& \leq \sum_n \frac{\Td{n}}{\lambda'_i - z \Td{n}}
\; = \; 1,
\end{align*}
which can be solved for $\lambda'_i$ to conclude that
\begin{align*}
\lambda'_i
& \geq (\NUMPAGES+z) \cdot (1/\NUMPAGES - 1/\NUMPAGES^2).
\end{align*}

We next substitute this bound into Equation~\eqref{eqn:beta-symmetric-response}.
Notice that $\frac{\Td{n}}{\lambda'_i - z \Td{n}}$ is monotone
decreasing in $\lambda'_i$ and monotone increasing in \Td{n}. 
Therefore, 
\begin{align*}
\SPr{n}
& \leq
\frac{1/\NUMPAGES + 1/\NUMPAGES^2}{%
(\NUMPAGES+z) \cdot (1/\NUMPAGES - 1/\NUMPAGES^2)
- z(1/\NUMPAGES + 1/\NUMPAGES^2)}
\; = \;
\frac{1/\NUMPAGES+1/\NUMPAGES^2}{1-1/\NUMPAGES-2z/\NUMPAGES^2}
\; \leq \; \frac{2}{\NUMPAGES}
\end{align*}
for sufficiently large \NUMPAGES.
(We used 
$z = \frac{(2\NUMENG-1)(1-\DF)}{(\NUMENG-1) \DF} \leq 2\NUMPAGES$ 
for $\DF > \frac{2\NUMENG}{\NUMPAGES + 2\NUMENG}$.)
We have shown that the unique symmetric local maximum is nearly
uniform (up to lower-order terms).

To verify that the \SPr{n} indeed constitute an equilibrium,
we verify that the best-response function for engine $i$ is strictly
concave, which implies that the local maximum is a global maximum.
It is enough to verify that the right-hand side of
Equation~\eqref{eqn:hessian-diagonal} is negative, since this implies
negative definiteness of the Hessian.
To see this, we observe that at the symmetric strategy profile,
$\sum_{k \neq i}\SP{k}{n} \leq \frac{2(\NUMENG-1)}{\NUMPAGES}$, which
implies that $\DF-(1-\DF) \cdot \sum_{k \neq i}\SP{k}{n} \geq 0$,
and thus that
$\frac{d^2 \Payoff{i}{\SatProb{i},\SatProbV{-i}}}{d^2 \SP{i}{n}} < 0$.

Finally, having shown the existence of a symmetric equilibrium, we
infer that the PoA is $\Omega(\NUMENG)$.
At equilibrium, a searcher can do no better than to pick an engine
arbitrarily, yielding a welfare of
$\sum_n \Td{n} \cdot \SPr{n} \leq \sum_n \Td{n} \cdot 2/\NUMPAGES
= 2/\NUMPAGES$.
On the other hand, in the optimum strategy, each engine $i$ would
deterministically display page $i$ (without loss of generality, assume
that the pages are sorted by decreasing \Td{n}), yielding a searcher
welfare of at least $\NUMENG/\NUMPAGES$.
\end{proof}

\begin{proposition}
\label{prop:intermediate-sqrt}
Consider the \SingletonGameName with proportional selection rule
\SelRule. 
If $\DF > \frac{6}{\sqrt{\NUMPAGES}}$, there are
instances with $\NUMENG = 2$ and PoA $\Omega(\sqrt{\NUMPAGES})$. 
\end{proposition}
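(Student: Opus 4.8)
The plan is to mirror the proof of Proposition~\ref{prop:intermediate-engines}, but to substitute the two-heavy-page instance used for the $\Omega(\sqrt{\NUMPAGES})$ bound in Proposition~\ref{prop:poa.n} for the near-uniform one. Concretely, I would take $\NUMENG = 2$ and a type distribution in general position with two ``heavy'' pages of probability $\Td{1} = \Td{2} = 1/\sqrt{\NUMPAGES}$ and $\NUMPAGES - 2$ ``light'' pages of probability $\Td{n} = (1 - 2/\sqrt{\NUMPAGES})/(\NUMPAGES - 2) = \Theta(1/\NUMPAGES)$. The social optimum has engine $i$ deterministically display heavy page $i$, for welfare at least $\Td{1} + \Td{2} = 2/\sqrt{\NUMPAGES}$. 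Since the PoA is a supremum over equilibria, it suffices to exhibit a single equilibrium of welfare $O(1/\NUMPAGES)$; for this I will use the symmetric profile and show the ratio is $\Omega(\sqrt{\NUMPAGES})$.

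For the symmetric profile I would reuse the computation in Theorem~\ref{thm:intermediate.sym}, which (with $\NUMENG = 2$, so that $z = 3(1-\DF)/\DF$) identifies the unique candidate symmetric strategy via Equation~\eqref{eqn:beta-symmetric-response} as $\SPr{n} = \Td{n}/(\lambda' - z\Td{n})$, where $\lambda'$ is the root of $\sum_n \Td{n}/(\lambda' - z\Td{n}) = 1$ from Equation~\eqref{eqn:lambda-condition}. Two elementary estimates then control every $\SPr{n}$. First, since each denominator is at most $\lambda'$, the constraint $\sum_n \SPr{n} = 1$ forces $\lambda' \geq \sum_n \Td{n} = 1$. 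Second, the hypothesis $\DF > 6/\sqrt{\NUMPAGES}$ gives $z < 3/\DF < \sqrt{\NUMPAGES}/2$, so that $z\Td{n} \leq 1/2$ for every page, the heavy pages being the extreme case. Hence every denominator satisfies $\lambda' - z\Td{n} \geq 1/2$, and therefore $\SPr{n} \leq 2\Td{n}$ for all $n$; in particular $\SPr{n} \leq 2/\sqrt{\NUMPAGES}$ on the heavy pages and $\SPr{n} = O(1/\NUMPAGES)$ on the light pages.

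The step I expect to require the most care, and the only real content beyond bookkeeping, is verifying that this candidate is a genuine equilibrium, i.e.\ a global best response, under the weak hypothesis $\DF > 6/\sqrt{\NUMPAGES}$ rather than the $\DF > 1 - 1/\NUMENG$ of Theorem~\ref{thm:intermediate.sym}. As in Proposition~\ref{prop:intermediate-engines}, I would invoke Equation~\eqref{eqn:hessian-diagonal}: holding the opponent fixed at the symmetric profile, each engine's payoff is strictly concave in its own strategy provided $\DF - (1-\DF)\sum_{k \neq i}\SP{k}{n} > 0$ for every page $n$. For $\NUMENG = 2$ this sum is just the opponent's probability $\SPr{n} \leq 2/\sqrt{\NUMPAGES}$, so the condition reduces to $\DF > (1-\DF)\cdot 2/\sqrt{\NUMPAGES}$, which holds with room to spare whenever $\DF > 6/\sqrt{\NUMPAGES}$. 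Negative definiteness of the (diagonal) Hessian then promotes the local maximum to a global one, certifying the equilibrium. With the equilibrium in hand, the ratio is immediate: the equilibrium welfare is $\sum_n \Td{n}\SPr{n} \leq 2\sum_n \Td{n}^2 = O(1/\NUMPAGES)$ while the optimum is $\Theta(1/\sqrt{\NUMPAGES})$, so the PoA is $\Omega(\sqrt{\NUMPAGES})$, as claimed.
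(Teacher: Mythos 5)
Your proposal is correct and follows essentially the same route as the paper's proof: the same two-heavy-page instance, the same Lagrangian characterization of the symmetric candidate from Theorem~\ref{thm:intermediate.sym}, the same Hessian-based verification via Equation~\eqref{eqn:hessian-diagonal} that it is a global best response under $\DF > 6/\sqrt{\NUMPAGES}$, and the same welfare comparison. The only (cosmetic) issues are that setting $\Td{1}=\Td{2}$ exactly contradicts your stated general-position assumption --- fixed by an infinitesimal perturbation, as in the paper's use of ranges --- while your uniform estimates $\lambda' \geq 1$ and $\SPr{n} \leq 2\Td{n}$ are a slightly cleaner packaging of the paper's bounds.
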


\begin{proof}
We use a distribution \TypeDist in general position falling into the
class used in the second part of the proof of Proposition~\ref{prop:poa.n}.
The first two pages are requested frequently:
$1/\sqrt{\NUMPAGES} \leq \Td{1}, \Td{2} \leq 2/\sqrt{\NUMPAGES}$,
while the remaining pages are requested roughly uniformly:
$1/(\NUMPAGES-2) - 4/\NUMPAGES^{1.5} \leq \Td{n} \leq 2/\NUMPAGES$.
(It is easy to verify that these constraints admit distributions.)
The outline of the proof is the same as for
Proposition~\ref{prop:intermediate-engines}: we use the
characterization of $\lambda'_i$ and \SPr{n} from the proof of
Theorem~\ref{thm:intermediate.sym} to lower-bound $\lambda'_i$,
then use this lower bound to upper-bound \SPr{1} and \SPr{2}
by $O(1/\sqrt{\NUMPAGES})$.
Using these bounds, we can verify that the symmetric
local maximum of the Lagrangian is an equilibrium, and bound the
resulting PoA. The calculations are more
involved due to the lack of symmetry across pages.
First, from Equation~\eqref{eqn:lambda-condition}, we get
\begin{align*}
1 
& = \sum_n \frac{\Td{n}}{\lambda'_i - z \Td{n}} \\
\; & \geq \; 
   \frac{2/\sqrt{\NUMPAGES}}{\lambda'_i - z/\sqrt{\NUMPAGES}} 
   + \frac{1-4(\NUMPAGES-2)/\NUMPAGES^{1.5}}{%
   \lambda'_i - z \cdot (1/(\NUMPAGES-2) - 4/\NUMPAGES^{1.5})} \\
\; & \geq \;
   \frac{2/\sqrt{\NUMPAGES}}{\lambda'_i - z/\sqrt{\NUMPAGES}} 
   + \frac{1-4/\sqrt{\NUMPAGES}}{\lambda'_i}.
\end{align*}
Multiplying through and rearranging yields
\begin{align*}
(\lambda'_i-1) (\lambda'_i-z/\sqrt{\NUMPAGES})
& \geq
4z/\NUMPAGES - 2\lambda'_i/\sqrt{\NUMPAGES} 
\; \geq \;
2/\sqrt{\NUMPAGES} \cdot (z/\sqrt{\NUMPAGES} - \lambda'_i).
\end{align*}
In the proof of Theorem~\ref{thm:intermediate.sym}, we argued that
$\lambda'_i > z \Td{n}$ for all $n$, so
$\lambda'_i-z/\sqrt{\NUMPAGES} > 0$, and we can divide by it to obtain
that $\lambda'_i \geq 1 - 2/\sqrt{\NUMPAGES}$.

We next use this lower bound on $\lambda'_i$,
and the upper bound that $\Td{n} \leq 2/\sqrt{\NUMPAGES}$,
to upper-bound the probability \SPr{n} with which engines display pages.
Substituting the bounds into
Equation~\eqref{eqn:beta-symmetric-response}, we obtain
\begin{align*}
\SPr{n} 
& \leq \frac{2/\sqrt{\NUMPAGES}}{1-2/\sqrt{\NUMPAGES} - z/\sqrt{\NUMPAGES}}
\; = \; \frac{2}{\sqrt{\NUMPAGES} - 2 - z}
\; \leq \; \frac{4}{\sqrt{\NUMPAGES}},
\end{align*}
because we assumed that 
$\DF \geq 6/\sqrt{\NUMPAGES} \geq 6/(2+\sqrt{\NUMPAGES})$, 
which implies that $z \leq \sqrt{\NUMPAGES}/2-2$.

To verify that the Hessian is negative definite, we show that the
right-hand side of Equation~\eqref{eqn:hessian-diagonal} is negative.
Notice that 
$\sum_{k \neq i}\SP{k}{n} \leq 4/\sqrt{\NUMPAGES}$ for all pages $n$,
implying $\DF-(1-\DF) \cdot \sum_{k \neq i}\SP{k}{n} \geq 0$,
because 
$\DF \geq 6/\sqrt{\NUMPAGES} \geq 4/(4+\sqrt{\NUMPAGES})$.
Thus, $\frac{d^2 \Payoff{i}{\SatProb{i},\SatProbV{-i}}}{d^2 \SP{i}{n}} < 0$.

To bound the welfare at this equilibrium, note the
probability that a user is satisfied is
\begin{align*}
\sum_n \Td{n} \SPr{n} 
& \leq 
2 \cdot 2/\sqrt{\NUMPAGES} \cdot 4/\sqrt{\NUMPAGES}
+ 2/\NUMPAGES \sum_{n>2} \SPr{n}
\; = \; O(1/\NUMPAGES).
\end{align*}
On the other hand, the optimum solution would have engine $i=1,2$
deterministically display page $i$, for a welfare of
at least $2/\sqrt{N}$. The PoA is thus $\Omega(\sqrt{N})$.
\end{proof}

\ignore{  

\subsection{Approximate Equilibria}

\blcomment{Now that we aren't focusing on pure equilibria, we may want to remove this section}

When $\DF$ is small relative to $1 / \NUMPAGES$, we note that an equilibrium does not necessarily exist, even when there are only two search engines.

\begin{claim}
\label{claim:no.equilibria}
There exist instances with $\DF = $, $\NUMPAGES = 2$, and $\NUMENG = 2$ in which no equilibrium exists, even if $\SelRule$ is the proportional sharing rule and the engine has only a single slot.
\end{claim}
\begin{proof}
\blcomment{Need to clean up this proof}
Suppose first that $\Gamma$ is the uniform distribution over three pages.  In this simple setting, a strategy profile can be expressed directly as the probabilities $\SP{i}{n}$ that engine $i$ selects page $n$.  

The utility of engine $i$ under this strategy profile is then
\[ u_i(\mathbf{q}) =  \sum_n \Td{n} \cdot\frac{\SP{i}{n}}{\SP{1}{n} + \SP{2}{n}}(\DF + (1-\DF)\SP{i}{n}).\]
Note first that, at equilibrium, we cannot have $\SP{1}{n} = \SP{2}{n} = 0$ for any $n$, as in this case engine $i$ would benefit by increasing $\SP{i}{n}$ to some $\epsilon > 0$ for $\epsilon$ sufficiently small.  This follows from the discontinuous nature of the proportional sharing rule when $\DF > 0$.

Next suppose that $\SP{i}{n} > 0$ for each $i$ and $n$.  Then a maximum occurs when the derivative with respect to $\SP{i}{1}$ and $\SP{i}{2}$ are equal to $0$, for each $i \in \{1,2\}$.  Solving analytically when $\DF$ is sufficiently small ($\DF < \frac{1}{100}$ suffices) we get that the only local optimum is $\SP{i}{n} = \frac{1}{3}$ for each $i$ and $n$.  The utility of each engine at this strategy profile is $\frac{1}{6} + O(\DF)$.  On the other hand, if engine $i$ selects page $1$ with probability $1$, his utility increases to $\frac{1}{4} + O(\DF)$.  

We conclude that we must have $\SP{i}{n} = 0$ for some $i$ and $n$; say $q_2(1) = 0$.  Repeating our argument above under this assumption, we again find that there is no equilibrium in which all other $\SP{i}{n}$ are positive.  We ultimately conclude that, for each $n$, we must have $\SP{i}{n} = 0$ for some engine $i$.  That is, each page is in the support of exactly one engine's distribution.

Let us now consider perturbing the distribution over pages slightly, to $(1/3 - \epsilon, 1/3, 1/3 + \epsilon)$ for some arbitrarily small $\epsilon > 0$.  The analysis above continues to hold, by continuity of the utility functions.  In particular, one of the engines has two pages in its support; suppose that is engine $1$.  Note, then, that regardless of how engine $1$ distributes probability between the two pages in his support, his utility would increase were he to shift some utility to the engine with higher probability under our perturbed distribution.  We conclude that a (pure) equilibrium does not exist, as required.
\end{proof}

Motivated by Claim \ref{claim:no.equilibria}, we consider approximate equilibria.  We show that the deterministic equilibrium from Section \ref{sec:beta0} is a $\DF$-approximate equilibrium for any $\DF$.  For $\DF$ close to $0$, such a strategy profile is nearly optimal for each engine.

\begin{theorem}
\label{thm:small.beta.approx}
Suppose that $\SelRule$ is non-indifferent and $\Gamma$ is non-degenerate.  Then there exists a $\DF$-approximate Nash equilibrium in pure strategies.
\end{theorem}
\begin{proof}
For an arbitrary profile of strategies $\mathbf{q}$ and parameter $\DF$, write $u_i^\DF(\mathbf{q})$ for the utility of engine $i$ in the resulting game with parameter $\DF$.  That is,
\begin{align*}
u_i^\DF(\mathbf{q}) & = \sum_{S} \sum_t \alpha_S \mu_t \SelRule(\mathbf{q}(S, t)) (\DF + (1-\DF)q_i(S, t)) \\
& = \sum_{S} \sum_t \alpha_S \mu_t \SelRule(\mathbf{q}(S, t)) (q_i(S,t) + \DF(1 - q_i(S, t)))
\end{align*}
Since all of the terms in the above expression are non-negative, it is clear that for all $\mathbf{q}$ and all $\DF$,
\[ u_i^0(\mathbf{q}) \leq u_i^\DF(\mathbf{q}). \]
Moreover, since we can write $u_i^\DF(\mathbf{q})$ as 
\[ u_i^\DF(\mathbf{q}) = \sum_{S} \sum_t \alpha_S \mu_t \SelRule(\mathbf{q}(S, t)) (1-\DF)q_i(S, t) + \DF\sum_{S} \sum_t \alpha_S \mu_t \SelRule(\mathbf{q}(S, t)) \]
and since $\SelRule(\mathbf{q}(S,t)) \leq 1$ for all $\mathbf{q}$, $S$, and $t$, we have
\[ u_i^\DF(\mathbf{q}) \leq u_i^0(\mathbf{q}) + \DF. \]

Now suppose $\mathbf{q}$ is a pure equilibrium in the game with $\DF = 0$, which must exist by Theorem \ref{thm:beta0.pure}.  Let $\mathbf{q}'$ be any other strategy profile.  Then, for each player $i$,
\begin{align*}
u^\DF_i(\mathbf{q}) \geq u^0_i(\mathbf{q}) \geq u^0_i(\mathbf{q}') \geq u^\DF_i(\mathbf{q}') - \DF
\end{align*}
and hence $\mathbf{q}$ is a $\DF$-approximate equilibrium, as required.
\end{proof}

As was shown in Section \ref{sec:beta0}, a deterministic equilibrium necessarily achieves at least half of the optimal social welfare.  We therefore conclude that a $2$-approximation to the socially optimal outcome can be achieved at a $\DF$-approximate equilibrium.

} 

\section{Convexity of Markov Rules}
\label{sec:markov}


We now describe a natural model of user behavior under which the
selection rule is always convex.
By Theorems~\ref{thm:convex.deterministic} and \ref{thm:bounded-PoA-for-deterministic}, 
if users behave according to this model, searcher welfare is high.
The model generalizes the Markov selection rule described in
Section~\ref{sec:models}.  
Each search engine $i$ corresponds to a unique state.
The user performs a walk on states, and always uses the engine
corresponding to his current state.
The transition probabilities between states depend on whether the user
was satisfied in the previous round; 
we use \TransSucc{i}{j} and \TransFail{i}{j} to denote the probability of
moving from state $i$ to state $j$ upon satisfaction or failure
(non-satisfaction), respectively.
The process has a stationary distribution \StatProb, and the 
induced \emph{Markovian selection rule} is 
$\SelRule[i](\SatProbs) = \StatProb[i]$.

Leaving state $i$ after a successful (resp., failed) query induces a
distribution over the state of the process in the next round.
In turn, that state defines an expected return time \RetSucc{i} (resp.,
\RetFail{i}) to state $i$: the expected number of rounds before the process
returns to state $i$. Note that $\RetSucc{i}, \RetFail{i} \geq 1$.

We call a Markov process \emph{monotone} if, for all $i$, 
$\TransSucc{i}{i} \geq \TransFail{i}{i}$ and $\RetSucc{i} \leq \RetFail{i}$. 
The first condition states that a user is no more likely to leave his
current engine after a successful query than after a failed query.
The second condition says that, conditioning on having left engine
$i$, the last query having been successful should not increase the
expected return time.  The Markov process is \emph{strictly
  monotone} if the first inequality is strict. 
We show that monotone Markovian selection rules are convex.  

\begin{theorem}
\label{thm:markov}
Any (strictly) monotone Markovian selection rule is (strictly) convex.
\end{theorem}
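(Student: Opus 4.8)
The plan is to recognize that the Markovian selection rule assigns engine $i$ the value $\SelRule[i](\SatProbs) = \StatProb[i]$, the stationary probability of state $i$, and to exploit Kac's formula: for an irreducible chain, $\StatProb[i] = 1/m_i$, where $m_i$ denotes the mean first-return time to state $i$. It therefore suffices to understand how $m_i$ depends on the single variable $\SatProb{i}$, with all other satisfaction probabilities --- and hence all transitions out of states $j \neq i$ --- held fixed.

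First I would perform a one-step analysis at state $i$. The realized one-step law out of $i$ is the mixture $\SatProb{i}\,\TransSucc{i}{\cdot} + (1-\SatProb{i})\,\TransFail{i}{\cdot}$, since the user's query succeeds with probability $\SatProb{i}$. Writing $h_j$ for the expected hitting time of $i$ started from $j$ (with $h_i = 0$), first-step analysis gives $m_i = 1 + \sum_{j} \big(\SatProb{i}\TransSucc{i}{j} + (1-\SatProb{i})\TransFail{i}{j}\big)\, h_j$. The key observation is that each $h_j$ for $j \neq i$ is \emph{independent of} $\SatProb{i}$: a trajectory launched from some $j \neq i$ is absorbed at $i$ without ever using the out-transitions of state $i$, so it only sees the (fixed) dynamics at states other than $i$. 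Collecting terms and invoking the definitions of $\RetSucc{i}$ and $\RetFail{i}$, this yields $m_i = \SatProb{i}\,\RetSucc{i} + (1-\SatProb{i})\,\RetFail{i}$, an \emph{affine} function of $\SatProb{i}$ that is bounded below by $1$ on all of $[0,1]$.

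It then remains to observe that $\SelRule[i] = 1/m_i$ is the reciprocal of a strictly positive affine function. Since $t \mapsto 1/t$ is convex on $(0,\infty)$ and precomposition with an affine map preserves convexity, $\SelRule[i](\SatProb{i}, \SatProbs_{-i})$ is convex in $\SatProb{i}$; this already establishes the non-strict claim. Note, too, that the monotonicity condition $\RetSucc{i} \le \RetFail{i}$ makes the slope of $m_i$ nonpositive, consistent with $\SelRule[i]$ being non-decreasing in $\SatProb{i}$. For the strict statement, $1/(\alpha x + \beta)$ is strictly convex precisely when its slope $\alpha \neq 0$; here $\alpha = \RetSucc{i} - \RetFail{i}$, so strict convexity is equivalent to $\RetSucc{i} \neq \RetFail{i}$.

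The main obstacle is exactly this last step: deducing the \emph{strict} inequality $\RetSucc{i} < \RetFail{i}$ from strict monotonicity. I would write $\RetSucc{i} - \RetFail{i} = \sum_{j \neq i}(\TransSucc{i}{j} - \TransFail{i}{j})\, h_j$ and aim to show it is strictly negative. The strict transition condition $\TransSucc{i}{i} > \TransFail{i}{i}$ means the success law places strictly more mass on $i$ itself, where $h_i = 0$, and correspondingly less on the states $j \neq i$, all of which satisfy $h_j \ge 1$; morally the success return is therefore shorter. The care required is that the signed difference $\TransSucc{i}{\cdot} - \TransFail{i}{\cdot}$ may redistribute mass unevenly among the $j \neq i$ with differing weights $h_j$, so neither hypothesis alone visibly forces a strict gap. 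I expect to need a coupling or stochastic-domination argument that ties together \emph{both} monotonicity conditions --- the transition-staying inequality and the return-time inequality $\RetSucc{i} \le \RetFail{i}$ --- to rule out the boundary case of equality. Establishing this strict separation cleanly is where the real work lies.
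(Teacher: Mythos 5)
Your overall route is the same as the paper's: both arguments reduce to showing that the mean return time $m_i$ to state $i$ is an affine function of \SatProb{i} (with everything else fixed), so that $\StatProb[i] = 1/m_i$ is the reciprocal of a positive affine function and hence convex. The paper phrases this as a renewal decomposition (expected stay length divided by stay length plus return time) rather than Kac's formula plus first-step analysis, but these are equivalent and your first display for $m_i$ is correct. The non-strict claim is therefore fine as you argue it.

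The problem is in your ``collecting terms'' step, and it is exactly what manufactures the difficulty you flag at the end. Since \RetSucc{i} and \RetFail{i} are return times \emph{conditioned on having left} state $i$, the correct grouping of $m_i = 1 + \sum_j\bigl(\SatProb{i}\TransSucc{i}{j} + (1-\SatProb{i})\TransFail{i}{j}\bigr)h_j$ is
\[
m_i \;=\; 1 + \SatProb{i}\,\ExitSucc{i}\,\RetSucc{i} + (1-\SatProb{i})\,\ExitFail{i}\,\RetFail{i},
\qquad \ExitSucc{i} = 1-\TransSucc{i}{i},\ \ \ExitFail{i} = 1-\TransFail{i}{i},
\]
not $m_i = \SatProb{i}\RetSucc{i} + (1-\SatProb{i})\RetFail{i}$; your version drops the exit probabilities and the additive $1$. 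Consequently the slope of $m_i$ in \SatProb{i} is $\ExitSucc{i}\RetSucc{i} - \ExitFail{i}\RetFail{i}$, not $\RetSucc{i}-\RetFail{i}$. With the correct slope, strict convexity is immediate and needs no coupling or stochastic-domination argument: strict monotonicity gives $\ExitFail{i} > \ExitSucc{i}$, and monotonicity gives $\RetFail{i} \geq \RetSucc{i} \geq 1$, so $\ExitSucc{i}\RetSucc{i} \leq \ExitSucc{i}\RetFail{i} < \ExitFail{i}\RetFail{i}$ and the slope is strictly negative. This is precisely how the paper concludes: the second derivative of $\StatProb[i]$ is proportional to $(\ExitFail{i}\RetFail{i} - \ExitSucc{i}\RetSucc{i})^2$, which is strictly positive whenever $\ExitFail{i} > \ExitSucc{i}$. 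So the gap you left open is real as written, but it disappears once the algebra is repaired; no additional probabilistic work is required.
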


\begin{proof}
For a given Markov process, consider the time spent in state $i$.
Write $\ExitFail{i} = 1-\TransFail{i}{i}$ and $\ExitSucc{i} = 1-\TransSucc{i}{i}$
for the probability that the process leaves state $i$ following a
failure or success, respectively.  
Given that the process is in state $i$, the expected number of rounds
until the process leaves state $i$ is 
$\frac{1}{(1-\SatProb{i})\ExitFail{i} + \SatProb{i} \ExitSucc{i}}$.
Conditioned on having left state $i$, the
probability that the last search was successful is
$\frac{\SatProb{i} \ExitSucc{i}}{(1-\SatProb{i})\ExitFail{i} + \SatProb{i} \ExitSucc{i}}$,
and the probability that it failed is 
$\frac{\SatProb{i} \ExitSucc{i}}{(1-\SatProb{i})\ExitFail{i} + \SatProb{i} \ExitSucc{i}}$.
We conclude that the
expected return time, conditioned on having left state $i$, is
$\frac{\SatProb{i} \ExitSucc{i}}{(1-\SatProb{i})\ExitFail{i} + \SatProb{i} \ExitSucc{i}}\RetSucc{i} +
\frac{(1-\SatProb{i}) \ExitFail{i}}{(1-\SatProb{i})\ExitFail{i} + \SatProb{i} \ExitSucc{i}}\RetFail{i}$.
Since the fraction of time spent in state $i$ is the expected stay
length divided by expected stay length plus return time, we have that
\begin{align*} 
\StatProb[i]
& = 
\frac{\frac{1}{(1-\SatProb{i})\ExitFail{i} + \SatProb{i} \ExitSucc{i}}}{\frac{1}{(1-\SatProb{i})\ExitFail{i} + \SatProb{i} \ExitSucc{i}} + \frac{\SatProb{i} \ExitSucc{i}}{(1-\SatProb{i})\ExitFail{i} + \SatProb{i} \ExitSucc{i}}\RetSucc{i} + \frac{(1-\SatProb{i})\ExitFail{i}}{(1-\SatProb{i})\ExitFail{i} + \SatProb{i} \ExitSucc{i}} \RetFail{i}} \\
& = \frac{1}{1 + \SatProb{i} \ExitSucc{i} \RetSucc{i} + (1-\SatProb{i}) \ExitFail{i} \RetFail{i}}.
\end{align*}
We must show that \StatProb[i] is a valid selection rule, meaning
that it is non-decreasing in \SatProb{i}.
Taking the derivative with respect to \SatProb{i}, we have 
\begin{align*}
\frac{d}{d \SatProb{i}}\StatProb[i] 
& = \frac{\ExitFail{i} \RetFail{i} - \ExitSucc{i} \RetSucc{i}}{(1+\ExitFail{i}\RetFail{i}(1-\SatProb{i})+\ExitSucc{i}\RetSucc{i} \SatProb{i})^2},
\end{align*}
which is non-negative since $\ExitFail{i} \geq \ExitSucc{i}$ and $\RetFail{i} \geq \RetSucc{i}$.  
Next, we show that \StatProb[i] is a convex function of $\SatProb{i}$.  
Taking the second derivative with respect to \SatProb{i}, we have
\begin{align*}
\frac{d^2}{d \SatProb{i}^2} \StatProb[i] 
& = \frac{2(\ExitFail{i} \RetFail{i} - \ExitSucc{i}
  \RetSucc{i})^2}{(1+\ExitFail{i}\RetFail{i}(1-\SatProb{i})+\ExitSucc{i}\RetSucc{i} \SatProb{i})^3},
\end{align*}
which is non-negative, and strictly positive whenever 
$\ExitFail{i} > \ExitSucc{i}$.
\end{proof}

\section{Conclusions and Future Work}
\label{sec:conclusions}

We introduced a natural model of competition between multiple search
algorithms, who are vying for visits from users searching for web pages.
Our model interpolates between different objectives for the search engines,
and allows different user behavior models.
At one extreme is the
pay-per-impression advertising, corresponding to an
objective of attracting as many searchers as possible.
At the other extreme is 
pay-per-click advertising, corresponding to
an objective of attracting \emph{and satisfying} as many searchers
as possible.

Our main result is a strong dichotomy: when engines are directly incentivized
to satisfy users, 
we expect to see specialization,
leading to a market that serves users well and achieves a
(tight) Price of Anarchy of 2.
On the other hand, with pure pay-per-impression advertising,
search engines will generally play symmetrically, duplicating each
other's strategies.
The Price of Stability in this setting can be as bad as
$\Omega(\NUMENG)$ or $\Omega(\sqrt{\NUMPAGES})$.
Furthermore, the bad equilibria often persist when search
engines interpolate between these extremes, even
when the fraction of pay-per-impression advertising is quite small.
However, these pessimistic examples vanish under natural models of
user behavior in which the demand for an engine is a convex
function of that engine's search result quality.


A number of questions remain open.
Our lower bounds apply to the pure Price of Stability; we did not rule
out the existence of more benign mixed (or
correlated) equilibria. Can the lower bounds be generalized?
Our understanding of equilibria for $\DF \in (0,1)$ is even more
limited; how can we characterize the equilibria in this
case, and can we rule out asymmetric pure equilibria?

One could also consider approximate equilibria, in which
engines cannot unilaterally deviate to improve their payoffs \emph{much}.
It would be interesting to study the Price of Anarchy/Stability
for approximate equilibria.

Finally, another layer of complexity may be added to the model 
to be more realistic. Our model posited that 
a user would choose an engine $i$ for each search independently.
In reality, most users are loyal to one search engine,
perhaps due to inertia.
This could be modeled by assuming that each user has a
distribution over potential queries, and picks one engine to maximize
the expected probability of satisfaction across all queries.
Analyzing the Price of Anarchy/Stability of this more complex game is
an interesting direction for future work.

\subsubsection*{Acknowledgments}
We thank Shaddin Dughmi, Matt Elliott, Xinran He and Nicole Immorlica
for useful discussions.
Work done in part while David Kempe was visiting Microsoft Research,
New England.

\bibliographystyle{abbrv}
\bibliography{../bibliography/names,../bibliography/conferences,../bibliography/bibliography,../bibliography/publications,../bibliography/bibliography-local}

\appendix
\section{Non-existence of Pure Equilibria}
\label{app:non-existence}
We justify the broadened focus on the correlated Price of Anarchy by
showing here that even the \SingletonGameName with $\DF=0$ may not
have any pure Nash Equilibria.
We construct an instance with $\NUMPAGES=2$ pages and $\NUMENG=3$
engines. The page request probabilities are in general position,
with $\Td{n} \in (\third, \frac{2}{3})$ for $n=1,2$.  
The selection rule is as follows.
For given $\SatProbs$, define 
$\gamma_i = \SatProb{i} \cdot 2^{(\SatProb{i+1} - \SatProb{i+2})}$, 
with indices taken mod $3$. 
The selection rule is then 
$\SelRule[i](\SatProbs) = \gamma_i/(\gamma_1 + \gamma_2 + \gamma_3)$.  
For example, $\SelRule[1](1,1,0) = 4/5$, 
whereas $\SelRule[1](1,0,1) = 1/5$.
Note that the \SelRule[i] satisfy the non-indifference property. 

By Theorem~\ref{thm:convex.deterministic}, if a pure equilibrium exists, it must
be a profile of deterministic strategies.  There are two cases.  
First, if all three engines select the same page $n$, then the
utility of each engine is 
$\Td{n} \cdot \SelRule[i](1,1,1) = \Td{n} \cdot \third < 2/9$.
In this case, each engine would benefit by switching to the other
page, which would give a utility of $\third > 2/9$. 
Thus, this profile cannot be an equilibrium.

The second case is that two engines select one page, and the third
engine selects the other.
By symmetry, we can assume that engines $1$ and $2$ select the same
page $n$.  In this case, the utility of engine $2$ is 
$\Td{n} \cdot \SelRule[2](1,1,0) = \Td{n} \cdot \frac{1}{5} \leq 2/15$.
Note, then, that engine $2$ would benefit by switching to the other
page $n'$; her utility would become 
$\Td{n'} \cdot \SelRule[2](0,1,1) = \Td{n'} \cdot \frac{4}{5} > 4/15$.
This profile is therefore not an equilibrium, either.
As we have exhausted all possibilities for a pure equilibrium, 
we conclude that no pure equilibrium exists.

\section{Necessity of Assumptions in Theorem~\ref{thm:convex.deterministic}}
\label{app:necessity}
At first sight, the assumptions of non-indifference of the \SelRule[i]
and general position of \TypeDist in Theorem \ref{thm:convex.deterministic} may 
appear like minor technical issues that should be easily eliminated. 
However, we provide examples showing that the bound in 
Theorem~\ref{thm:convex.deterministic} requires both assumptions, even for
instances of the \SingletonGameName.

\begin{example}[Necessity of Non-Indifference]
Consider an instance of the \SingletonGameName with \NUMPAGES pages
and $\NUMENG = \NUMPAGES+1$ engines. 
Choose any distribution \TypeDist in general position over these
singleton sets, subject to the constraint that 
$\Td{n} \in [\frac{2}{3\NUMPAGES}, \frac{4}{3\NUMPAGES}]$
for each page $n$.  

Consider the following selection rule.
First, engine $\NUMPAGES+1$ is selected with probability
$\SelRule[\NUMPAGES+1](\SatProbs) 
= 1 - \sum_{i \leq \NUMPAGES} \SelRule[i](\SatProbs)$, 
so it suffices to describe the selection rule on engines 
$i \leq \NUMPAGES$.
For each $i \leq \NUMPAGES$, if engine $i$ is the only engine in $[\NUMPAGES]$ 
with non-zero probability of displaying page $n$
(i.e., $q_i > 0$ and $q_j = 0$ for all $j \in [\NUMPAGES]\backslash\{i\}$),
then the user will always choose engine $i$, so $\SelRule[i](\SatProbs) = 1$.
Otherwise, let 
$\SelRule[i](\SatProbs) = \max(0, \min(\frac{1}{\NUMPAGES}, \half(1 - \sum_{j \in [\NUMPAGES]\backslash\{i\}} q_j)))$.
In other words, engine $i$ is selected with probability
$\half(1 - \sum_{j \in [\NUMPAGES]\backslash\{i\}} q_j)))$, but the value is
``truncated'' into the interval $[0,\frac{1}{\NUMPAGES}]$.  
Note that the \SelRule[i] are not non-indifferent, 
since $\SelRule[i](\SatProbs)$ does not depend on \SP{i}{n}
when $\SP{j}{n} \neq 0$ for any $j \in [\NUMPAGES]\backslash\{i\}$.
The selection rules \SelRule[i] are, however, non-decreasing in
\SP{i}{n} and non-increasing in \SP{j}{n} for $j \neq i$.

The socially optimal strategy profile is for each engine 
$i \leq \NUMPAGES$ to deterministically display page $i$. 
This profile leads to a welfare of $1$. 

We now describe an equilibrium profile.
Let $X = \big(\sum_{j  \in \PageSet} \frac{1}{\Td{j}} \big)^{-1}$.
Consider the symmetric profile in which every engine selects page $n$ with
probability $\rho_n = \frac{\Td{n}-X}{\Td{n}(\NUMPAGES-1)}$.
Note that since $\Td{n} \in [\frac{2}{3\NUMPAGES}, \frac{4}{3\NUMPAGES}]$ 
for each $n$, we have 
\begin{align*}
0 
& < \rho_n 
\; = \; \frac{1-\frac{X}{\Td{n}}}{\NUMPAGES-1} 
\; \leq \; \frac{1 - \frac{3\NUMPAGES/4}{3\NUMPAGES^2/2}}{\NUMPAGES-1}
\; \leq \; \frac{2}{\NUMPAGES}.
\end{align*}
Moreover, the values $(\rho_n)$ form a probability distribution (and
hence a valid strategy profile) since  
\begin{align*}
\sum_{n \in \PageSet} \rho_n 
& = \frac{1}{\NUMPAGES-1} \cdot \sum_j 
\left( 1 - \frac{1/\Td{j}}{\sum_{n \in \PageSet} 1/\Td{n}}\right) 
\; = \; \frac{1}{\NUMPAGES-1} \cdot
\left(\NUMPAGES - \frac{\sum_{j\in \PageSet} 1/\Td{j}}{\sum_{n \in \PageSet} 1/\Td{n}}\right)
\; = \; 1.
\end{align*}
Under this strategy profile, for engine $i \in [\NUMPAGES]$ and page
$n$, we have that
\begin{align*}
\SelRule[i](\SatProbs) 
& = \max\left(0, \min\left(\frac{1}{\NUMPAGES}, \half (1 - (\NUMPAGES - 1)\rho_n)\right)\right)
\; = \; \min\left(\frac{1}{\NUMPAGES}, \frac{X}{2\Td{n}}\right).
\end{align*}
Because $\Td{n'} \in [\frac{2}{3\NUMPAGES}, \frac{4}{3\NUMPAGES}]$ for
all pages $n'$, we get that $\Td{n} \leq 2X/\NUMPAGES$, which in turn
implies that $\frac{X}{2\Td{n}} \leq \frac{1}{\NUMPAGES}$,
so $\SelRule[i](\SatProbs) = \frac{X}{2\Td{n}}$.
The utility of engine $i \leq \NUMPAGES$ under this strategy profile
is therefore
\begin{align*}
\sum_n \Td{n} \SelRule[i](\SatProbs(n)) \cdot \SP{i}{n}
& = \sum_{n \in \PageSet} X/2 \cdot q_i(n) 
\; = \; X/2,
\end{align*}
which is independent of her strategy.
Thus, every strategy is a best-response for engine $i \leq \NUMPAGES$;
in particular, the proposed profile constitutes an equilibrium
for the engines $i \leq \NUMPAGES$.
Since the strategy of engine $\NUMPAGES+1$ does not influence the
payoff of the other engines, we may simply assume that engine
$\NUMPAGES+1$ plays some best response.

To bound the welfare of the equilibrium, we consider separately the
engines $i \leq \NUMPAGES$ and engine $\NUMPAGES+1$.
Engine $\NUMPAGES+1$ can satisfy a user with probability at most
$\max_n \SET{\Td{n}} = O(1/\NUMPAGES)$. 
For engines $i \leq \NUMPAGES$, because their strategies are
symmetric, the social welfare is precisely the probability that an
arbitrary engine satisfies a user type drawn from \TypeDist. 
This is at most 
$\NUMPAGES \cdot \max_n \SET{\rho_n} \cdot \max_n \SET{\Td{n}} =
O(1/\NUMPAGES)$.
The Price of Anarchy in this example is therefore $\Omega(\NUMPAGES)$. 
\end{example}

\begin{example}[Necessity of General Position]
Consider an instance of the \SingletonGameName with \NUMENG engines
and $\NUMPAGES = \NUMENG$ pages. 
The distribution over desired pages is uniform:
$\Td{n} = 1/\NUMPAGES$ for each page $n$.
The selection function is a weighted proportional rule: 
$\SelRule[i](\SatProbs) = \frac{w_i \SatProb{i}}{\sum_j w_j \SatProb{j}}$,
where $w_1 = \NUMENG^2$ and $w_j = 1$ for all $j \neq 1$. 
(In other words, users have a significant preference for engine 1.)

The socially optimal strategy profile is for each engine $i$ to
deterministically display page $i$ in her first slot; this leads to a
social welfare of $1$.
Consider the following strategy profile: 
engine $1$ selects one of pages $1$ through $\NUMENG-1$ uniformly at
random to display in her first slot, 
and all remaining engines deterministically display page \NUMENG in
their first slot with probability $1$.
We claim that this profile constitutes an equilibrium. 
To see why, note that engine $1$ does not compete with any other
engine for pages $1$ through $\NUMENG - 1$.
Thus, for any distribution $p$ over pages $1$ through $\NUMENG - 1$ 
chosen by engine $1$ for slot 1, her utility is 
$\sum_{n < \NUMENG} \frac{1}{\NUMENG} p(n) = \frac{1}{\NUMENG}$;
on the other hand, any weight placed on page \NUMENG is multiplied
by a factor strictly less than 1 due to the competition, 
and thus leads to a loss in utility.
On the other hand, if any other engine chooses to display in her first
slot a page $n < \NUMENG$, her utility conditioned on selecting that
page is $\frac{1}{\NUMENG} \cdot
\frac{1}{1+\NUMENG^2\frac{1}{\NUMENG-1}} < \frac{1}{\NUMENG^2}$,
which is the utility derived from page \NUMENG.
Thus, each engine $i > 1$ maximizes her payoff by displaying page
\NUMENG in her first slot with probability $1$.
In summary, this strategy profile is an equilibrium.

In this equilibrium, a user who desires page \NUMENG is satisfied
with probability $1$ (since he will choose any of the engines that
deterministically display it in the first slot), 
whereas any other user is satisfied with probability
$\frac{1}{\NUMENG-1}$ (since he must choose engine 
$1$, who displays the desired page with this probability).
The social welfare is thus 
$\frac{1}{\NUMENG} + \frac{\NUMENG-1}{\NUMENG}\cdot
\frac{1}{\NUMENG-1} = \frac{2}{\NUMENG}$;
hence, the Price of Anarchy in this example is $\Omega(\NUMENG)$. 
\end{example}

\end{document}